\newtheorem{theorem}{Theorem}[section]
\newtheorem{lemma}[theorem]{Lemma}
\theoremstyle{definition}
\newtheorem{definition}[theorem]{Definition}
\newtheorem{remark}[theorem]{Remark}
\numberwithin{equation}{section}
\DeclareMathOperator{\disp}{disp}
\newcommand{\N}{\ensuremath{{\mathbb N}}}
\newcommand{\F}{\ensuremath{{\mathbb F}}}
\newcommand{\Z}{\ensuremath{{\mathbb Z}}}
\newcommand{\Pro}{\ensuremath{{\mathbb P}}}
\newcommand{\vol}{\mathrm{vol}}
\newcommand{\eps}{\varepsilon}
\author{Mario Ullrich \and Jan Vyb\'iral}
\date{\today}
\address[Mario Ullrich]{Institut f\"ur Analysis\\
Johannes Kepler Universit\"at Linz\\
Altenbergerstr.~69\\
4040 Linz\\
Austria}
\email{mario.ullrich@jku.at}
\address[Jan Vyb\'iral]{Department of Mathematics\\
Czech Technical University\\
Trojanova 13\\
12000 Praha \\
Czech Republic}
\email{jan.vybiral@fjfi.cvut.cz}
\thanks{We would like to express our gratitude to the Erwin Schr\"odinger International Institute for
Mathematics and Physics for its hospitality during the programme on ``Tractability of High Dimensional
Problems and Discrepancy'', where some part of this research was carried out.
We also gratefully acknowledge the support of the Oberwolfach Research Institute for Mathematics,
where initial discussion were held during the workshop ``Perspectives in High-Dimensional Probability and Convexity''.
We thank also Michael Gnewuch, Daniel Kr\'al' and Hemant Tyagi for fruitful discussions.
JV was supported by the grant P201/18/00580S
of the Grant Agency of the Czech Republic and by
the European Regional Development Fund-Project ``Center for Advanced Applied Science'' (No. CZ.02.1.01/0.0/0.0/16\_019/0000778)}
\keywords{}
\subjclass{}
\begin{document}

\title[Deterministic sets with small dispersion]{Deterministic constructions of high-dimensional sets with small dispersion}

\begin{abstract}
The dispersion of a point set $P\subset[0,1]^d$ is the volume of the largest box with 
sides parallel to the coordinate axes, which does not intersect $P$. 
Here, we show a construction of low-dispersion point sets, which can be deduced
from solutions of certain $k$-restriction problems, 
which are well-known in coding theory.

It was observed only recently that, for any $\varepsilon>0$,
certain randomized constructions provide point sets
with dispersion smaller than $\varepsilon$ and number of elements growing only 
logarithmically in $d$.
Based on deep results from coding theory, we present explicit, deterministic 
algorithms to construct such point sets 
%with small dispersion and with cardinality growing only with $\log(d)$. 
in time that is only polynomial in~$d$.
Note that, however, the running-time will be super-exponential in $\eps^{-1}$.

\end{abstract}
\maketitle

%\tableofcontents

%%%%%%%%%%%%%%%%%%%%%%%%%%%%%%%%%%%%%%%%%%%%%%%%%%%%%%%%%%%%%%%%%%%

\section{Introduction and Results}

For $d\in\N$ and a point set $P\subset[0,1]^d$ we define  
the \emph{dispersion of $P$} by
\[
\disp(P) \;:=\; \sup_{B\colon B\cap P=\varnothing}\, |B|, 
\]
where the supremum is over all axis-parallel boxes $B=I_1\times\dots\times I_d$ 
with intervals $I_\ell\subset[0,1]$, 
and $|B|$ denotes the (Lebesgue) volume of $B$. 
As we are interested in point sets which make the above quantity as small as 
possible, we additionally define, for $n,d\in\N$, 
the \emph{$n$-th-minimal dispersion} 
\[
\disp(n,d) \;:=\; \inf_{\substack{P\subset[0,1]^d\colon\\ \#P=n}}\, \disp(P)
\]
and, for $\eps\in(0,1)$, its inverse function
\[
N(\eps,d) \;:=\; \min\Bigl\{n\colon \disp(n,d)\le\eps\Bigr\}.
\]
Hence, $N(\eps,d)$ is the minimal cardinality of a point set $P\subset[0,1]^d$ 
that has dispersion smaller than $\eps$.

Besides the fact that the above geometric quantities are interesting in its 
own right, they also attracted attention in the numerical analysis community, 
especially when it comes to very high dimensional applications. 
The reason is that bounds on the (minimal) dispersion lead to bounds on 
worst-case errors, and hence the complexity, for some numerical problems,
including optimization and approximation in various settings, 
see~\cite{BDDG14,Niederreiter83,NR15,RT96,Te17b,Te17c,YLV00}. 
This is a similar situation as for the much more studied \emph{discrepancy}, 
which corresponds to certain numerical integration problems, 
see e.g.~\cite{DP10,DP14,DT97,Niederreiter92,No15,NW10}.

Moreover, bounding the dispersion is clearly also related to the problem 
of \emph{finding} the largest empty box. In dimension two, this is the 
\emph{Maximum Empty Rectangle Problem}, which is one of the oldest problems 
in computational geometry. For the state of the art and further references 
we refer to~\cite{DJ13,DJ13b,DJ16,DJ18,NLH84}.

Regarding bounds on the inverse of the minimal dispersion, 
we have
\[
\frac{\log_2(d)}{8\eps} \;\le\; N(\eps,d) \;\le\; \frac{C^d}{\eps}
\]
for some $C<\infty$ and all $\eps<1/8$, see~\cite{AHR15}, where the upper bound is attained 
by certain digital nets. See also~\cite{DJ13b,RT96} for a related 
upper bound.
Although these bounds show the correct dependence on $\eps$, 
they are rather bad with respect to the dimension $d$. 
This gap was narrowed in the past years by several authors, 
see~~\cite{Kr17,Ru17,Sosno,UV}, 
including the important work of Sosnovec who proved that the logarithmic 
dependence on $d$ is optimal. 
With this respect, the best bound at present is
\begin{equation}\label{eq:bound}
\frac{\log_2(d)}{8\eps} \;\le\; N(\eps,d) 
\;\le\; 2^7\,\log_2(d)\,\frac{\bigl(1+\log_2(\varepsilon^{-1})\bigr)^2}{\varepsilon^{2}},
\end{equation}
see~\cite{UV} and Theorem~\ref{thm:UV18} below.
Note that the logarithmic dependence is special for the cube, as it is known
that, for the same problem on the torus, we have a  lower bound linear in $d$, 
see~\cite{MU18}.

The main drawback of these results is that they only show the existence of 
point sets with small dispersion. 
The only explicit constructions we are aware of are the above mentioned 
digital nets, which lead to a bad $d$-dependence, 
and sparse grids, which satisfy the upper bound 
$N(\eps,d)\le(2d)^{\log_2(1/\eps)}$, see~\cite{Kr17}. 
It is already clear from the number of points, 
that both do not lead to constructions of point sets with small dispersion 
that can be carried out in time that is polynomial in $d$.
Moreover, as the existence proofs are based on random points on a finite grid, 
one could use the \emph{naive algorithm}: 
try each of the possible configurations, calculate 
its dispersion (if possible) and output a set that satisfies 
the requested bound. 
The running-time of this algorithm is (in the worst case) clearly at least
exponential in $d$.

\begin{remark}
%Regarding the discrepancy, it is known
Note that the decision problem, 
if a given point set has \emph{discrepancy} smaller than $\eps$, 
is known to be NP-hard, see~\cite{GSW09} or \cite[Section 3.3]{DGW14}, and the same is 
true for the dispersion, if the dimension $d$ is part of the input, cf. \cite{BK}.

For upper bounds on the cost of constructing points with small discrepancy 
and further literature, see~\cite{DGW10, DGW14, G10}. However, all known algorithms 
for this problem so far have running-time at least exponential in $d$. 
We hope that the results of this paper will lead to some progress also 
for this problem.
\end{remark}
\smallskip

Reconsidering the bound~\eqref{eq:bound}, one may hope that points 
with small dispersion may be constructable also in very high dimensions. 
% It remains to explain, what we mean by {\bf explicit construction}. 
%This is not what we call explicit.
Ideally, we would like to have algorithms for the construction of point 
sets of size $N=\mathcal{O}(N(\eps,d))$ with dispersion at most $\eps>0$, 
whose computational cost is polynomial in $\eps^{-1}$ and $d$. 
However, this seems to be out of reach. 
(Note that the output already costs $d\cdot N$.)

Here, we focus on the dependence on $d$ and show, 
using deep results from the theory of error-correcting codes, 
%for fixed $\eps$, 
that
point sets with small dispersion of size $\sim\log(d)$ can be constructed 
in time that is polynomial in $d$.
Unfortunately, we do not have a good control on the dependence on $\eps^{-1}$. 
It remains an open problem to find \emph{fully-polynomial constructions} 
for the dispersion.

Our two main results are derandomized versions of the results from~\cite{Sosno} 
and~\cite{UV}. Both use different approaches and lead to somewhat different 
results. The first one, as discussed in Section~\ref{sec:sos}, 
leads to point sets of size $\mathcal{O}_\eps(\log d)$ that can be 
constructed in time linear in the size of the output, i.e., in time 
$\mathcal{O}_\eps(d\log d)$, see Algorithm~1 and Theorem~\ref{thm:main1}.
Here, and in the following, $\mathcal{O}_\eps(\log d)$ means that 
the implied constant depends on $\eps$ in an unspecified way.
A second, and much more involved, construction will be given by Algorithm~2 
in Section~\ref{sec:UV}. The corresponding result reads as follows.

\medskip
\noindent
{\bf Theorem 4.4}
{\it
%\begin{theorem}[test]
Let $\varepsilon\in(0,\frac{1}{4}]$ and $d\ge 2$. 
Then, there is an absolute constant $C<\infty$, such that
Algorithm 2 constructs a set $P\subset[0,1]^d$ with $\disp(P)\le \varepsilon$ and 
\[
\#P\le C \left(\frac{1+\log_2(\varepsilon^{-1})}{\varepsilon}\right)^6 \log(d^*),
\]
where $d^*:=\max(d, 2/\eps)$. 

The running-time of Algorithm~2 is $\mathcal{O}_\eps(d^c)$ for some $c<\infty$.
}
\medskip

Note that, in contrast to Algorithm~1, the point set that is constructed by 
Algorithm~2 has size that is polynomial in $\eps^{-1}$. However, as the proof 
shows, its computational cost is much larger.

Our general approach is as following. We start with a detailed inspection of the random constructions
of point sets with small dispersion. This will allow us to clearly separate the setting of the construction
and its randomized part. It will turn out then, which properties are crucial for each of the approaches
and what exactly is the role of randomness. Afterwards, we replace the randomized part by a deterministic
one.

%%%%%%%%%%%%%%%%%%%%%%%%%%%%%%%%%%%%%%%%%%%%%%%%%%%%%%%%%%%%%%%%%%

\section{Basics from Coding Theory}

Our deterministic constructions of point sets with small dispersion
will be essentially obtained by certain ``derandomization'' of
recent proofs from \cite{Sosno,UV}. As we will rely on rather deep tools from coding theory,
we summarize in this section the necessary definitions and results
for later use.

\subsection{Universal sets}\label{subsec:universal}

First, we introduce the concept of \emph{$(n,k)$-universal sets},
which is also known in coding theory under the name of \emph{$t$-independent set problem}.
It has its roots in testing of logical circuits \cite{SB}.
\begin{definition}[$(n,k)$-universal sets]\label{def:nkuniversal}
Let $1\le k\le n$ be positive integers. We say that $T\subset\{0,1\}^n$ is an $(n,k)$-universal set,
if for every index set $S\subset \{1,\dots,n\}$ with $\#S=k$, the projection of $T$ on $S$ contains
all possible $2^k$ configurations.
\end{definition}
Naturally one is interested in (randomized and deterministic) constructions
of small $(n,k)$-universal sets. The straightforward randomized construction
provides the existence of an $(n,k)$-universal set of size $\lceil k2^k\log(n)\rceil$.
On the other hand, \cite{KS} gives a lower bound on the size of an $(n,k)$-universal set 
of the order $\Omega(2^k\log(n))$.
There exist several deterministic constructions of $(n,k)$-universal sets in the literature (cf. \cite{Alon, Alonall, NN}) and we shall
rely on the results given in \cite{NSS}.

\begin{theorem}[{\cite[Theorem 6]{NSS}}]\label{thm:NSS:cite}
There is a deterministic construction of an $(n,k)$-universal set of size 
%$2^{mk} (mk)^{O(\log mk)}\log(mn)$,
%$2^{(1+o(1)) mk} \log(n)$,
$2^{k+O(\log^2(k))} \log(n)$,
which can be listed in linear time of the length of the output.
\end{theorem}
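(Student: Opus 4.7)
The strategy is to \emph{derandomize} the probabilistic argument of~\cite{UV} (i.e., Theorem~\ref{thm:UV18}) by replacing the random grid sample with a deterministic object coming from coding theory. The plan has two main ingredients: a combinatorial reduction that extracts the finite structure hidden inside the probabilistic UV construction, followed by a black-box invocation of a deterministic $k$-restriction construction.

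First, I would reinspect the UV proof to identify the ``combinatorial core''. In~\cite{UV}, points are sampled from a grid $G_m^d$ with $m\asymp\eps^{-1}$, and the key observation is that for any axis-parallel box $B\subset[0,1]^d$ with $|B|\ge\eps$, only a small number $k$ of coordinates are ``active'' (have short sides); on the remaining coordinates the box is essentially unrestricted. Concretely, one can choose $k\asymp\log_2(\eps^{-1})$ so that in order to hit every box of volume $\ge\eps$, it suffices that, for every choice of $k$ active coordinates and every prescribed ``cell pattern'' on those coordinates, at least one point of~$P$ realizes that pattern.

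This is exactly a $k$-restriction problem: we need a subset $P\subset G_m^d$ whose projection onto every $k$-subset of coordinates hits every element of a prescribed list of patterns in $\{1,\dots,m\}^k$. For binary alphabet this is the $(d,k)$-universal set of Definition~\ref{def:nkuniversal}; for the larger alphabet needed here, I would invoke the $k$-restriction generalization of Theorem~\ref{thm:NSS:cite} (as in the same line of work \cite{NSS,Alon,Alonall,NN}) to produce, in time linear in the output, an explicit set of size $m^{k+O(\log^2 k)}\log(d)$. Algorithm~2 is then defined to output this set, viewed as points in $G_m^d\subset[0,1]^d$, and Step~1 above guarantees that its dispersion is at most~$\eps$.

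It remains to optimize parameters and do the bookkeeping. With the naive choice $m\asymp\eps^{-1}$ and $k\asymp\log_2(\eps^{-1})$, the bound $m^{k+O(\log^2 k)}\log d$ is super-polynomial in $\eps^{-1}$, so a refinement is necessary: one splits the box-hitting requirement into a bounded number of \emph{scales}, each of which only demands a $k$-restriction of \emph{constant} order over an alphabet of size $\mathrm{poly}(\eps^{-1}\log\eps^{-1})$, and then unions the resulting point sets. Pushing this through gives the claimed $\mathcal{O}\bigl((\log(\eps^{-1})/\eps)^6\log d^*\bigr)$ size bound; the factor $\log d^*$ (rather than $\log d$) handles the regime $d<2/\eps$, where the reduction to an active-coordinate analysis is meaningless and where a direct construction on the full grid is used instead. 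The running-time bound $\mathcal{O}_\eps(d^c)$ is then immediate from the linear-time listing in Theorem~\ref{thm:NSS:cite} and the polynomial-in-$d$ output size.

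The main obstacle I anticipate is the calibration in the previous paragraph: producing the exponent $6$ (rather than a worse polynomial exponent, or a super-polynomial bound) in the final size estimate. This requires a careful multi-scale version of the UV active-coordinate argument, matched at each scale with the best available $k$-restriction size bound, and the tight interaction between the two is the delicate technical heart of the proof.
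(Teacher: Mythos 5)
Your proposal does not address the statement it is supposed to prove. The statement is Theorem~\ref{thm:NSS:cite}, i.e.\ the quoted result of Naor, Schulman and Srinivasan that an $(n,k)$-universal set of size $2^{k+O(\log^2(k))}\log(n)$ can be constructed deterministically and listed in time linear in the output. This is a pure coding-theoretic/derandomization statement about subsets of $\{0,1\}^n$ whose projections onto every $k$-subset of coordinates exhaust all $2^k$ patterns; it says nothing about dispersion, boxes, or point sets in $[0,1]^d$. What you have written is instead a proof sketch for the paper's main dispersion result (Theorem~\ref{thm:main2} / Algorithm~2): you discuss active coordinates of large boxes, the grid $\mathbb M_m^d$, the exponent $6$, and the quantity $d^*=\max(d,2/\eps)$. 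Worse, your argument explicitly invokes Theorem~\ref{thm:NSS:cite} (and its $k$-restriction generalization) as a black box, so as a proof of Theorem~\ref{thm:NSS:cite} itself it is circular.

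For the record: the paper does not prove this statement at all; it is imported verbatim from \cite{NSS} and used as an ingredient. A genuine proof would have to go through the NSS machinery itself, roughly: use an explicit $(n,k,k^2)$-splitter built from asymptotically good error-correcting codes (as in Lemma~\ref{lem:splitter}) to reduce the coordinate universe from $n$ to $k^2$ at a multiplicative cost of $O(\mathrm{poly}(k)\log n)$, then construct a $(k^2,k)$-universal set by a further splitter-based decomposition into blocks of size $O(\log k)$ combined with exhaustive search over a small $k$-wise independent (or small-bias) sample space; the block decomposition is what produces the $2^{O(\log^2 k)}$ overhead on top of the unavoidable $2^k$. None of these steps appear in your write-up, so there is no proof of the stated theorem here.
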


Although the notion of an $(n,k)$-universal set is not very flexible and comes from a different
area of mathematics, we will see in Section~\ref{sec:sos}, that there is indeed a link to sets with small dispersion.
Reusing the known results from coding theory,
it will already allow us to obtain our first deterministic construction of a point set with cardinality
of order $\log(d)$. However, in this approach we have only very limited control of the dependence on $\eps$.

We also need the following natural generalization of $(n,k)$-universal sets. 
We could not find this concept in the literature, but we assume that this 
and the proceeding lemma are known.

\begin{definition}[$(n,k,b)$-universal sets]\label{def:nkbuniversal}
Let $1\le k\le n$ and $b\ge 2$ be positive integers. 
We say that $T\subset\{0,1,\dots,b-1\}^n$ is an $(n,k,b)$-universal set,
if for every index set $S\subset \{1,\dots,n\}$ with $\#S=k$, 
the projection of $T$ on $S$ contains all possible $b^k$ configurations.
\end{definition}

If $b=2$, Definitions~\ref{def:nkuniversal} and \ref{def:nkbuniversal} coincide, 
i.e. $(n,k,2)$-universal sets are just the usual $(n,k)$-universal sets.
We use the following two observations to transfer the known results about 
$(n,k)$-universal sets to our setting.

\begin{lemma}\label{lem:uset}
Let $1\le k\le n$ and $b\ge 2$ be positive integers.
\begin{enumerate}
\item[$(i)$] Let $T\subset \{0,1,\dots,b\}^n$ be an $(n,k,b+1)$-universal set. 
	Then there is an $(n,k,b)$-universal set $T'\subset\{0,1,\dots,b-1\}^n$ 
	of at most the same size.%\\[1mm]
\item[$(ii)$] Let $m\in\N$ and $T\subset\{0,1\}^{m n}$ be an $(m n,m k)$-universal set. 
	Then there is an $(n,k,2^m)$-universal set of the same size.
\end{enumerate}
\end{lemma}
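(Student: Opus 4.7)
The plan is to prove both parts by exhibiting an explicit, almost tautological, map between the relevant coordinate spaces and then checking that universality is preserved. Neither part looks to involve any real combinatorial difficulty; the only thing one must be careful about is matching $k$-subsets on the two sides correctly.

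For part $(i)$ I would pick any map $\phi\colon\{0,1,\dots,b\}\to\{0,1,\dots,b-1\}$ that is the identity on $\{0,1,\dots,b-1\}$ (for instance $\phi(b):=0$ and $\phi(x):=x$ otherwise), apply it coordinatewise to obtain
\[
T' \;:=\; \bigl\{\bigl(\phi(t_1),\dots,\phi(t_n)\bigr)\colon t\in T\bigr\}\;\subset\;\{0,1,\dots,b-1\}^n,
\]
and argue as follows. Clearly $\#T'\le \#T$. Fix $S\subset\{1,\dots,n\}$ with $\#S=k$ and a target configuration $c\in\{0,1,\dots,b-1\}^k$. Viewing $c$ as an element of $\{0,1,\dots,b\}^k$, the $(n,k,b+1)$-universality of $T$ yields some $t\in T$ whose projection on $S$ equals $c$; since each entry of $c$ already lies in the set on which $\phi$ is the identity, the projection of $(\phi(t_1),\dots,\phi(t_n))$ on $S$ is still $c$. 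Hence $T'$ is $(n,k,b)$-universal.

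For part $(ii)$ I would work with the canonical bijection $\beta\colon\{0,1\}^m\to\{0,1,\dots,2^m-1\}$ given by the binary expansion, and group the $mn$ binary coordinates into $n$ consecutive blocks of length $m$; this induces a bijection
\[
\Psi\colon\{0,1\}^{mn}\;\longrightarrow\;\{0,1,\dots,2^m-1\}^n,
\]
and I would set $T':=\Psi(T)$. Since $\Psi$ is a bijection, $\#T'=\#T$. To verify universality, fix $S\subset\{1,\dots,n\}$ with $\#S=k$ and a target $c\in\{0,1,\dots,2^m-1\}^k$. Let $\widetilde S\subset\{1,\dots,mn\}$ be the union of the $k$ length-$m$ blocks indexed by $S$, so $\#\widetilde S=mk$, and let $\widetilde c\in\{0,1\}^{mk}$ be the concatenation of the binary expansions of the entries of $c$. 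The $(mn,mk)$-universality of $T$ produces $t\in T$ whose $\widetilde S$-projection equals $\widetilde c$, and by construction of $\Psi$ the $S$-projection of $\Psi(t)$ is then exactly $c$.

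The main (and only) potential obstacle is notational: one has to set up the identification of $\{0,1\}^{mn}$ with $(\{0,1\}^m)^n$ and the corresponding correspondence between $S\subset\{1,\dots,n\}$ and $\widetilde S\subset\{1,\dots,mn\}$ carefully enough that the statement ``$\Psi(t)$ projected on $S$ equals $c$'' is manifestly the same as ``$t$ projected on $\widetilde S$ equals $\widetilde c$''. Once the bijection is fixed, both parts reduce to a direct verification from the definitions.
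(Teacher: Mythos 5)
Your proposal is correct and follows exactly the paper's approach: part $(i)$ replaces all occurrences of $b$ by $0$ coordinatewise, and part $(ii)$ reads the $mn$ binary coordinates as $n$ base-$2^m$ digits. You simply spell out the verification that the paper leaves implicit.
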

\begin{proof}
The proof is quite straightforward. To show $(i)$, just replace all occurrences of $b$ 
among the coordinates of $T$ by zero.
For the proof of the second part it is enough to interpret each 
$x\in T\subset\{0,1\}^{m n}$ as a digital representation
of $\tilde x\in\{0,1,\dots,2^{m}-1\}^n$.
\end{proof}

The direct random construction yields the existence of an $(n,k,b)$-universal set
of the size $\lceil kb^k\log(ebn/k)\rceil$. Using Theorem \ref{thm:NSS:cite},
we can easily obtain a deterministic construction of an $(n,k,b)$-universal set of only a slightly larger size.

\begin{theorem}\label{thm:NSS}
There is a deterministic construction of an $(n,k,2^m-1)$-universal set of size 
%$2^{mk} (mk)^{O(\log mk)}\log(mn)$,
%$2^{(1+o(1)) mk} \log(n)$,
$2^{mk+O(\log^2(mk))} \log(n)$,
which can be listed in linear time of the length of the output.
\end{theorem}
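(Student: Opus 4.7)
The plan is to chain together Theorem \ref{thm:NSS:cite} with the two parts of Lemma \ref{lem:uset} to reduce the non-binary problem to the known binary one.

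First, I would invoke Theorem \ref{thm:NSS:cite} with the parameters $n' := mn$ and $k' := mk$. This produces, in linear time in its output length, an $(mn, mk)$-universal set $T \subset \{0,1\}^{mn}$ of size at most $2^{mk + O(\log^2(mk))} \log(mn)$. Next, I would apply Lemma \ref{lem:uset}(ii) to reinterpret each binary word in $T$ as the base-$2^m$ digital representation of a word in $\{0,1,\dots,2^m-1\}^n$; this yields an $(n, k, 2^m)$-universal set of the same cardinality. Finally, I would apply Lemma \ref{lem:uset}(i) with $b = 2^m - 1$ to obtain an $(n, k, 2^m - 1)$-universal set of cardinality no larger than the previous step.

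The only nontrivial bookkeeping is to check that the size bound $2^{mk + O(\log^2(mk))} \log(mn)$ matches the claimed bound $2^{mk + O(\log^2(mk))} \log(n)$. Since $\log(mn) = \log m + \log n \le (1 + \log m) \log n$ for $n \ge 2$, and since $\log(1+\log m) = O(\log(mk))$ is absorbed by the $O(\log^2(mk))$ term in the exponent, the two bounds agree. (The edge case $n = 1$ is trivial since then any set of a single word is universal.)

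For the running-time claim, each of the three steps processes the data in time linear in the output length: Theorem \ref{thm:NSS:cite} by hypothesis, Lemma \ref{lem:uset}(ii) by a direct regrouping of bits into $m$-bit blocks, and Lemma \ref{lem:uset}(i) by overwriting the symbol $b$ with $0$ wherever it appears. Since each stage only shrinks or preserves the output size, the total running-time remains linear in the final output length. I do not foresee any substantial obstacle; the argument is essentially a size- and time-preserving reduction from the binary case, and the main care required is the elementary calculation that $\log(mn)$ can be absorbed into the stated asymptotic.
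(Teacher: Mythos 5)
Your proposal is correct and follows exactly the paper's route: apply Theorem~\ref{thm:NSS:cite} with parameters $mn$ and $mk$, then chain Lemma~\ref{lem:uset}(ii) followed by Lemma~\ref{lem:uset}(i). You are slightly more careful than the paper in explicitly absorbing the $\log(mn)$ into the stated $\log(n)$ via the $O(\log^2(mk))$ slack in the exponent, a point the paper leaves implicit.
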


\begin{proof}
By Theorem \ref{thm:NSS:cite}, there is a construction of an 
$(mn,mk)$-universal set with at most $2^{mk+O(\log^2(mk))}\log(mn)$ elements.
The result then follows from Lemma~\ref{lem:uset}.
\end{proof}

\subsection{$k$-restriction problems}\label{subsec:restriction}

For the derandomization of the analysis of \cite{UV}
we need a more flexible notion of
the so-called \emph{$k$-restriction problems}, 
see~\cite[Section~2.2]{NSS}.
Solutions to these problems will be one of the building blocks of 
our deterministic construction of sets with small dispersion 
whose size is polynomial in $1/\eps$ and, still, logarithmic in~$d$.

\begin{definition}[$k$-restriction problems]\label{def:krestriction} 
Let $b,k,n,M$ be positive integers and let 
${\mathcal C}=\{C_1,\dots,C_M:C_i\subset\{0,1,\dots,b-1\}^k\}$
be invariant under the permutations of the index set $\{1,\dots,k\}$. 
We say that $T=\{x^1,\dots,x^N\}\subset \{0,1,\dots,b-1\}^n$ 
satisfies the $k$-restriction problem with respect to ${\mathcal C}$,
if for every $S\subset\{1,\dots,n\}$ with $\#S=k$ and for every 
$j\in\{1,\dots,M\}$, there exists $x^\ell\in T$ with $x^\ell|_S\in C_j.$
\end{definition}

Definitions~\ref{def:nkuniversal} and \ref{def:nkbuniversal} are indeed 
special cases of Definition~\ref{def:krestriction}. 
To show this, let us choose $M=b^k$ and let $C_1,\dots,C_M$ be all the 
different singleton subsets of $\{0,\dots,b-1\}^k$.
Then, $T$ satisfies the $k$-restriction problem with respect to ${\mathcal C}$ 
if for every index set $S\subset \{1,\dots,n\}$ and every possible 
$v\in\{0,\dots,b-1\}^k$ there exists an $x\in T$ with $x|_S=v$, i.e.~if the 
restriction of $T$ to every index set $S$ with $k$ elements
attains all $b^k$ possible values. 

An important parameter of the $k$-restriction problems is the minimal size of each of the restriction sets $C_j$, i.e.
\begin{equation}\label{eq:c}
c=c(\mathcal{C}):=\min_{1\le j\le M}\#C_j.
\end{equation}

Random constructions of sets satisfying the $k$-restriction problem with 
parameters $(b,k,n,M)$ and ${\mathcal C}=\{C_1,\dots,C_M\}$
are based on a simple union bound. 
Indeed, let $1\le j\le M$ and $S\subset \{1,\dots,n\}$ with $\#S=k$ be fixed. 
The probability, that a randomly chosen vector $v\in\{0,1,\dots,b-1\}^n$ 
satisfies $C_j$ on $S$ is at least
\[
{\mathbb P}(v\text{ satisfies }C_j\text{ at }S)
\,=\, \frac{\#C_j}{b^k}\ge \frac{c}{b^k}.
\]
If we choose $N$ random vectors $v_1,\dots,v_N$ independently, the probability that none of them satisfies $C_j$ at $S$ is at most
\[
{\mathbb P}(\text{no }v_1,\dots,v_N\text{ satisfies }C_j\text{ at }S)
\,=\, \Bigl(1-\frac{\#C_j}{b^k}\Bigr)^N\le \Bigl(1-\frac{c}{b^k}\Bigr)^N.
\]
Finally, the probability that there is a set $S\subset\{1,\dots,n\}$ with $\#S=k$ and $1\le j\le M$, such that no $v_1,\dots,v_N$
satisfies $C_j$ at $S$ is, by the union bound, at most
\[
\binom{n}{k}\cdot M\cdot \Bigl(1-\frac{c}{b^k}\Bigr)^N.
\]
This expression is smaller than one if
$$
%N>\frac{k\log(n)+\log(M)}{\log\bigl(\frac{b^k}{b^k-c}\bigr)}.
N \,>\, \frac{\log(n^k M)}{\log\bigl(\frac{b^k}{b^k-c}\bigr)}.
$$
This means that there exist solutions to a $k$-restriction problem 
with parameters $(b,k,n,M)$ of size $N$ whenever
\[
N \,\ge\, \left\lceil\frac{b^k}{c} \log(n^k M)\right\rceil, 
\]
where $c$ is from \eqref{eq:c}.
Theorem~1 of \cite{NSS} states that there is a deterministic algorithm 
that outputs such a solution of size equaling the union bound.
The main idea of its proof is that the random sampling can be replaced
by an extensive search through a $k$-wise independent probability space
with $n$ random variables with values in $\{1,\dots,b\}.$
\begin{theorem}[{\cite[Theorem 1]{NSS}}]\label{thm:NSS2}
For any $k$-restriction problem with parameters $(b,k,n,M)$ with $b\le n$, 
there is a deterministic algorithm that outputs a collection 
obeying the $k$-restrictions, with the size of the collection 
equaling 
\[
\left\lceil\frac{b^k}{c} \log(n^k M)\right\rceil,
\]
where $c$ is from \eqref{eq:c}.
The time taken to output the collection is
\[
\mathcal{O}\left(\frac{b^k}{c} \left(\frac{e n^2}{k}\right)^k M\, \mathcal{T}\right), 
\]
where $\mathcal{T}$ is the time complexity of the membership oracle. 
\end{theorem}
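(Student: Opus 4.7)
The plan is to derandomize the probabilistic argument presented just before the theorem. Uniform random sampling of $N \ge \lceil (b^k/c)\log(n^k M) \rceil$ vectors from $\{0,1,\dots,b-1\}^n$ solves the $k$-restriction problem with positive probability by the union bound. I would combine two standard derandomization tools: restrict each vector's sample space to a small \emph{$k$-wise independent} distribution, so that a single vector ranges over a polynomial-size set; then walk greedily through the $N$ choices via the method of conditional expectations.

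First, I would construct an explicit $k$-wise independent distribution on $\{0,1,\dots,b-1\}^n$ supported on a set $\Omega$ whose size is polynomial in $n^k$. Such a construction can be extracted from the $(n,k,b)$-universal sets of Theorem~\ref{thm:NSS}, or obtained directly from dual BCH codes or polynomial evaluation over a finite field. The defining property is that for every $k$-subset $S \subset \{1,\dots,n\}$ and every $v \in \{0,1,\dots,b-1\}^k$, a uniform draw $x$ from $\Omega$ satisfies $\Pro[x|_S = v] = b^{-k}$. Hence for each restriction set $C_j$ we have $\Pro[x|_S \notin C_j] \le 1 - c/b^k$, exactly as in the fully independent case. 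Drawing $N$ i.i.d.~samples from $\Omega$ and re-running the union bound gives the same estimate $\binom{n}{k} M (1 - c/b^k)^N < 1$, since each bad event depends on only $k$ coordinates of each vector.

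Next, I would apply the method of conditional expectations to the potential
\[
\Phi_i(v_1,\dots,v_i) \,=\, \sum_{|S|=k}\sum_{j=1}^{M} \Pro\Bigl[\text{no } v_\ell|_S\in C_j \,\Bigm|\, v_1,\dots,v_i\Bigr],
\]
with $v_{i+1},\dots,v_N$ assumed i.i.d.~uniform on $\Omega$. Each summand is either $0$ (when some $v_\ell$ with $\ell \le i$ already lies in $C_j$ on $S$) or $(1 - \#C_j/b^k)^{N-i}$, so $\Phi_i$ is directly computable. At step $i+1$ I pick $v_{i+1}\in\Omega$ minimizing $\Phi_{i+1}$; by the law of total expectation $\Phi_{i+1} \le \Phi_i$. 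Since $\Phi_0 < 1$ and $\Phi_N$ is an integer, we conclude $\Phi_N = 0$, meaning every $k$-restriction is satisfied.

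The main obstacle I anticipate is matching the precise running-time bound $\mathcal{O}\bigl(\tfrac{b^k}{c}(en^2/k)^k M \mathcal{T}\bigr)$. The nominal cost per greedy step is $|\Omega|\cdot \binom{n}{k}\cdot M \cdot \mathcal{T}$, and one has $\binom{n}{k} \le (en/k)^k$; to land on the stated bound, the sample space must have size essentially $(n/k)^k$ up to lower-order factors, and one must update $\Phi_i$ incrementally, since only those pairs $(S,j)$ whose conditional probability changes after fixing $v_{i+1}$ need recomputation. Setting up the sample space with both the required independence and the right size, and organising this incremental bookkeeping, is the delicate part; the rest is standard derandomization.
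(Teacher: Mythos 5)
Since the paper does not prove Theorem~\ref{thm:NSS2} itself but cites \cite[Theorem~1]{NSS}, offering only the one-line summary that ``random sampling can be replaced by an extensive search through a $k$-wise independent probability space,'' your task is really to reconstruct the NSS argument. Your strategy --- drawing the $N$ vectors from a small $k$-wise independent sample space $\Omega\subset\{0,\dots,b-1\}^n$ and then applying the method of conditional expectations with the potential $\Phi_i$ --- is exactly the strategy the paper's summary points to, and the existence half of your argument is sound: $k$-wise uniformity of the marginals reproduces the probabilities $\Pro[x|_S\in C_j]=\#C_j/b^k$ on every $k$-subset, so the union bound transfers verbatim; $\Phi_i$ is non-increasing under the greedy choice by the law of total expectation; and $\Phi_N$ is a nonnegative integer bounded by $\Phi_0<1$, hence zero.

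The genuine gap is in the running-time accounting, which you yourself flag but do not close. With the standard polynomial-evaluation construction over $\F_q$, $q$ the least prime power with $q\ge\max(b,n)$ (this is where the hypothesis $b\le n$ enters, making $q=\Theta(n)$), one has $|\Omega|=q^k=\Theta(n^k)$, and a per-step cost of $|\Omega|\binom{n}{k}M\mathcal{T}$ over $N=\lceil(b^k/c)\log(n^kM)\rceil$ steps yields $\mathcal{O}\bigl(\frac{b^k}{c}\log(n^kM)\bigl(\tfrac{en^2}{k}\bigr)^kM\mathcal{T}\bigr)$ --- a spurious $\log(n^kM)$ factor. Your suggestion that $|\Omega|$ should be of size roughly $(n/k)^k$ does not help: that is smaller by $k^k$ than what the polynomial construction gives, and known lower bounds on the support of $k$-wise uniform distributions rule out getting substantially below $n^{\Omega(k)}$ anyway. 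Likewise, ``incremental bookkeeping'' is a plausible direction but not an argument; the NSS proof requires an explicit amortization over the shrinking set of uncovered pairs $(S,j)$ (or a different accounting altogether) that you have not supplied. A smaller loose end is the case of non--prime-power $b$: exactly uniform $k$-wise marginals on $\{0,\dots,b-1\}$ do not fall out of $\F_q$ polynomial evaluation for free, and either a rejection/reduction step or a tolerance for approximate $k$-wise independence (with a corresponding adjustment of the union bound) has to be spelled out.
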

Here, the membership oracle is a procedure which, for given $v\in\{1,\dots,b\}^n$, $S\subset\{1,\dots,n\}$
with $\#S=k$ and $j\in\{1,\dots,M\}$, outputs if the restriction of $v$ on $S$ belongs to $C_j.$
In what follows it can be executed in ${\mathcal{T}}=O(k)$ time.

\subsection{Splitters}\label{subsec:splitter}

The last ingredient of our derandomization procedure are splitters.
They played a central role in \cite{NSS} as the basic building blocks of all deterministic constructions
given there. Essentially, they allow to split a large problem into smaller problems which can then be
treated by the extensive search of Theorem \ref{thm:NSS2}.

\begin{definition}[$(n,k,l)$-splitter]\label{def:splitters} 
Let $n,k,l$ be positive integers. An $(n,k,l)$-splitter $H$ is a family of functions from $\{1,\dots,n\}$
to $\{1,\dots,l\}$, such that for every $S\subset\{1,\dots,n\}$ with $\#S=k$ there is $h\in H$, which splits $S$ perfectly.
It means that the sets $h^{-1}(\{j\})\cap S$ are of the same size for all $j\in\{1,\dots,l\}$ (or as similar as possible if $l\nmid k$).
%(For $n\ge\ell\ge k\ge2$, splitters are usually called \emph{perfect hash families}.)
\end{definition}

Similarly to \cite{NSS} and \cite{AMS}, we will rely on $(n,k,k^2)$-splitters.
By Definition \ref{def:splitters}, $A(n,k)$ is an $(n,k,k^2)$-splitter, 
if it is a collection of mappings $a:\{1,\dots,n\}\to\{1,\dots,k^2\}$ 
such that for every $S\subset \{1,\dots,n\}$ with $\#S=k$, 
there is an $a\in A(n,k)$, which is injective on $S$. 

Small $(n,k,k^2)$-splitters can be obtained from asymptotically good error 
correcting codes, see~\cite[Lemma 3]{AMS}.
Indeed, let $c_1,\dots,c_n\in\{1,\dots,k^2\}^L$ denote the codewords of an error correcting code of length $n$
over the alphabet $\{1,\dots,k^2\}$ and 
normalized Hamming distance at least $1-\frac{2}{k^2}$.
This is, $c_i$ and $c_j$ ($i\neq j$) can be equal on at most $2L/k^2$ coordinates.
If now $S\subset\{1,\dots,m\}$ with $\#S=k$, then there are $k(k-1)/2$ 
pairs $(i,j)$ with $i\neq j\in S$.
As $\frac{2L}{k^2}\cdot \frac{k(k-1)}{2}<L$, there must be a coordinate,
where all the codewords differ from each other. Finally, if we consider the mappings
$h_j:i\to c_i(j)$, $i=1,\dots,n$ we observe that 
$H=\{h_j:j=1,\dots,L\}$ is an $(n,k,k^2)$-splitter of size $L$.

Such explicit codes exist by \cite{Alonall} with $L=\mathcal{O}(k^4\log n)$. 
To see this, note that the rate $R$ of a code as above is defined by 
$R:=\log_{k^2}(n)/L=\log(n)/(L\log(k^2))$. 
By \cite[eq.~(5)]{Alonall}, see also~\cite[Lemma 3]{AMS}, 
we obtain that an explicit code with 
normalized Hamming distance at least $\delta=1-\frac{2}{k^2}$ 
exist with 
\[
R \,\ge\, \gamma_0 \left(1-H_{k^2}\left(1-\frac{1,5}{k^2}\right)\right) \left(1-\frac{1-2/k^2}{1-1,5/k^2}\right), 
\]
where $\gamma_0>0$ is an absolute constant and 
$H_q(x):=-x\log_q(x)-(1-x)\log_q(1-x)+x\log_q(q-1)$.
Simple computations show that this implies $R\ge c/(k^4 \log(k^2))$ for some $c>0$. 
This, in turn, implies that we can choose $L=\mathcal{O}(k^4\log n)$.

The explicit construction of~\cite{Alonall} 
yields a linear code that is
based on a two-fold 
concatenation code that combines the Wozencraft ensemble, Justesen codes 
and expander codes, which in turn rely on famous deterministic constructions 
of expander graphs~\cite{LPS}.
This construction is 'uniformly constructive' (see~\cite{Alonall}), 
i.e., the construction 
can be done in time growing only polynomially in $n$. %$\mathcal{O}_k(n^c)$, where $c$ is an absolute constant, 
Furthermore, the code satisfies~\cite[eq.~(5)]{Alonall} for all $\delta<1-1/q$, where $q=k^2$ in our case.
Note also that the running-time depends only polynomially also on $k$, cf. \cite[Thm.~3(iv)]{NSS}.
For the details we refer to Sections 3 and 4 of~\cite{Alonall}.

The following lemma summarizes the discussion above and shows that 
$(n,k,k^2)$-splitters of relatively small size can be constructed 
explicitly in polynomial time.

\begin{lemma}[{cf.~\cite[Lemma~3]{AMS}}]\label{lem:splitter}
There is an explicit $(n,k,k^2)$-splitter 
%$A(n,k)$ 
of size 
\[
\mathcal{O}\bigl(k^4\log(n)\bigr)
\]
that can be constructed in polynomial time in $n$ and $k$.
\end{lemma}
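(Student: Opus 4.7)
The plan is to reduce the construction of the splitter to the existence of an explicit error-correcting code with appropriate parameters, exactly as sketched in the discussion preceding the lemma. First I would invoke the explicit construction from~\cite{Alonall} to obtain codewords $c_1,\dots,c_n\in\{1,\dots,k^2\}^L$ over an alphabet of size $k^2$, with block length $L=\mathcal{O}(k^4\log n)$ and pairwise normalized Hamming distance at least $1-2/k^2$. Existence of such a code with this rate bound is the main non-trivial input: plugging $q=k^2$ and $\delta=1-2/k^2$ into~\cite[eq.~(5)]{Alonall} gives
\[
R \,\ge\, \gamma_0\left(1-H_{k^2}\!\left(1-\tfrac{1.5}{k^2}\right)\right)\left(1-\frac{1-2/k^2}{1-1.5/k^2}\right) \,=\, \Omega\!\left(\frac{1}{k^4\log(k^2)}\right),
\]
and hence $L=\log_{k^2}(n)/R=\mathcal{O}(k^4\log n)$.

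Next I would define $H=\{h_1,\dots,h_L\}$ by setting $h_j(i):=c_i(j)$ and verify that it is an $(n,k,k^2)$-splitter. Fix $S\subset\{1,\dots,n\}$ with $\#S=k$. For each of the $\binom{k}{2}$ pairs $i\neq i'$ in $S$, the codewords $c_i$ and $c_{i'}$ coincide on at most $2L/k^2$ coordinates by the distance bound, so the set of coordinates $j\in\{1,\dots,L\}$ on which some pair in $S$ collides has cardinality at most
\[
\binom{k}{2}\cdot\frac{2L}{k^2} \,=\, \frac{L(k-1)}{k} \,<\, L.
\]
Consequently, some coordinate $j^*$ must separate all indices in $S$, i.e.\ $h_{j^*}|_S$ is injective. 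Since $k\le k^2$, the preimages $h_{j^*}^{-1}(\{j\})\cap S$ then have size $0$ or $1$, which is \emph{as similar as possible} in the sense of Definition~\ref{def:splitters}.

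For the running time, the construction of the code in~\cite{Alonall} is \emph{uniformly constructive}, so it can be carried out in time polynomial in both $n$ and $k$; reading off $H$ from the coordinates of the codewords adds only linear overhead in the output size $\mathcal{O}(nL)=\mathcal{O}(nk^4\log n)$. The main obstacle is not in the structural counting argument, which is routine, but in invoking the right black-box: one needs an \emph{explicit} family of codes whose rate decays only polynomially in $k$ while the relative distance stays just below $1-1/q$ with $q=k^2$. Standard off-the-shelf codes such as Reed--Solomon achieve excellent distance but require alphabets that grow with $n$, so the two-fold concatenation construction of~\cite{Alonall}, built from the Wozencraft ensemble, Justesen codes, and expander codes relying on the LPS expanders, is essential; one must also check that the target relative distance $\delta=1-2/k^2$ falls inside the admissibility region $\delta<1-1/q=1-1/k^2$ required by~\cite[eq.~(5)]{Alonall}.
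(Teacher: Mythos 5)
Your proposal is correct and follows essentially the same route as the paper: extract a length-$L$ code over the alphabet $\{1,\dots,k^2\}$ with normalized distance $\ge 1-2/k^2$ from the explicit construction of~\cite{Alonall}, bound $L=\mathcal{O}(k^4\log n)$ via the rate estimate from~\cite[eq.~(5)]{Alonall}, define the splitter by $h_j(i)=c_i(j)$, and verify injectivity on any $k$-subset by a union bound over the $\binom{k}{2}$ pairs. The details (the arithmetic $\binom{k}{2}\cdot 2L/k^2 < L$, the admissibility condition $\delta<1-1/q$, and the "uniformly constructive" property for the polynomial running time) all match the discussion the paper gives immediately before the lemma.
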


\medskip

The $(n,k,k^2)$-splitters can be used whenever $n$ is ``very large'' compared 
to $k$. Roughly speaking, and in the context of the present paper,
we will transform a $k$-restriction problem of size $n$ to a 
$k$-restriction problem of size $k^2$, which can then be solved 
using the results of Section~\ref{subsec:restriction}. 
This will lead to construction algorithms with an apparently optimal 
dependence of their running time on the original problem size $n$.
This approach was already used to prove~\cite[Theorem~6]{NSS}, 
see Theorem~\ref{thm:NSS}.

\medskip

%%%%%%%%%%%%%%%%%%%%%%%%%%%%%%%%%%%%%%%%%%%%%%%%%%%%%%%%%%%%%%%%%%%

\section{Derandomization of Sosnovec's proof}\label{sec:sos}

First, we consider the construction of Sosnovec~\cite{Sosno}, 
which gives logarithmic dependence of $N(\varepsilon,d)$ on $d$
but involves no special control of its dependence on $\varepsilon^{-1}$. 
His main theorem was the following.

\begin{theorem}(\cite[Theorem 2]{Sosno})\label{Theo:Sosnovec} For every $\varepsilon\in(0,\frac{1}{4}]$, there exists a constant $c_\varepsilon>0$, such that
for every $d\ge 2$ there is a point set $P\subset[0,1]^d$ with $\disp(P)\le\varepsilon$ and 
\[
\#P \,\le\, c_\varepsilon\log(d).
\]
%and
%\[
%\disp(P,d)\le \varepsilon.
%\]
\end{theorem}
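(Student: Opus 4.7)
The plan is to prove Theorem~\ref{Theo:Sosnovec} by the probabilistic method, following the outline of a random construction combined with a careful union bound over ``essential'' box types. The argument rests on two pillars: a structural lemma about boxes of large volume, and a careful counting of the effective parameter space of such boxes.

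First, I would establish the key structural observation. For any axis-parallel box $B = I_1 \times \cdots \times I_d$ with $|B| > \varepsilon$, the number of coordinates $\ell$ with $|I_\ell| < 1/2$ is at most $k := \lceil \log_2(1/\varepsilon) \rceil$. This is immediate from $|B| \le (1/2)^s \cdot 1^{d-s}$ where $s$ counts such ``small'' coordinates. More generally, for any threshold $\alpha \in (1/2, 1)$, at most $\lceil \log(1/\varepsilon)/\log(1/\alpha) \rceil$ coordinates have $|I_\ell| < \alpha$. Thus, although boxes live in a $2d$-dimensional continuous parameter space, the boxes relevant for dispersion are essentially parametrized by only $\mathcal{O}_\varepsilon(1)$ ``atypical'' coordinates.

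Next, I would sample a random point set $P = \{x^1, \ldots, x^N\}$ with $N = c_\varepsilon \log d$. A natural candidate distribution: for each point, choose a random $k$-subset $S_i \subseteq \{1, \ldots, d\}$ of ``fine'' coordinates where $x^i_\ell$ is uniform on $[0,1]$, and set $x^i_\ell$ uniform on a fixed small set of anchors (e.g., $\{1/4, 3/4\}$) for $\ell \notin S_i$. For a fixed box $B$ with $|B| > \varepsilon$ and small-coordinate set $S_B$ (of size at most $k$), the ``good event'' is $S_B \subseteq S_i$ and $x^i \in B$. Using the structural observation that large intervals ($|I_\ell| \ge 1/2$) always contain one of $\{1/4, 3/4\}$, and that only $\mathcal{O}_\varepsilon(1)$ intervals are ``medium'' (containing exactly one anchor, hence carrying a nontrivial binary bit), one obtains a lower bound on $\Pr[x^i \in B]$ in terms of quantities depending only on $\varepsilon$, times a factor from the matching of $S_i$ to $S_B$. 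The union bound is performed over ``essential types'' obtained by grouping boxes by $S_B$, a discretization of the small intervals to resolution $\mathcal{O}(\varepsilon)$, the medium-coordinate subset, and the binary anchor-signature on those medium coordinates; this count is polynomial in $d$ of degree depending only on $\varepsilon$.

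The main obstacle is reconciling the per-type probability with the essential-type count so that the final bound is logarithmic in $d$ rather than polynomial. With the na\"ive distribution above, $\Pr[x^i \in B] \gtrsim \varepsilon / \binom{d}{k}$, and the essential-type count is at least $\binom{d}{k}$; the union bound then forces $N \sim d^k \log d$, which is polynomial in $d$. Achieving the true $\mathcal{O}_\varepsilon(\log d)$ bound requires either a more structured choice of the $S_i$'s (so that every relevant $k$-subset $S_B$ is covered without paying the full $\binom{d}{k}$ cost in probability), or bounding the effective VC-dimension of the concept class of boxes with $|B| > \varepsilon$ by $\mathcal{O}_\varepsilon(\log d)$ and invoking the $\varepsilon$-net theorem. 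Formalizing this balance is the technical heart of Sosnovec's argument, and is precisely the point that the subsequent derandomizations in this paper are designed to preserve.
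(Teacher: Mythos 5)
Your setup---probabilistic method plus a union bound over boxes of large volume grouped into finitely many ``essential types''---is the right high-level frame, and you correctly sense that the heart of the matter is making the per-type hitting probability independent of $d$ while keeping the type count only polynomial in $d$. But you then propose a sampling scheme (random $k$-subsets of ``fine'' coordinates, anchors $\{1/4,3/4\}$ elsewhere) that, as you yourself observe, introduces a $\binom{d}{k}^{-1}$ loss in the hitting probability and therefore gives $N \sim d^k \log d$; and you leave the fix as an open question. So the proposal is not a proof: the essential step is exactly the one you defer.

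The gap is closed by a different, simpler choice of both the grid and the notion of ``active coordinate,'' which is where Sosnovec's argument departs from yours. Fix $m$ with $2^{-m}\le\varepsilon<2^{-m+1}$ and work on the finite grid $\mathbb{M}_m=\{1/2^m,\dots,(2^m-1)/2^m\}$, sampling each of the $N$ points \emph{uniformly and independently from the full grid} $\mathbb{M}_m^d$ (no random choice of a support set $S_i$). Call a coordinate $j$ active for a box $B=I_1\times\cdots\times I_d$ if $\mathbb{M}_m\not\subset I_j$; any such $I_j$ has length at most $1-2^{-m}$, and the volume constraint $|B|>2^{-m}$ then forces at most $A_m:=\min(m2^m,d)$ active coordinates. (Note this is a much larger bound than your $k=\lceil\log_2(1/\varepsilon)\rceil$ count of coordinates with $|I_\ell|<1/2$; that smaller set is not the right one, because ``medium'' intervals with $\mathbb{M}_m\not\subset I_j$ but $|I_j|\ge 1/2$ still constrain the grid point and are not absorbed by two anchors.) On the inactive coordinates a grid point automatically lands in $I_j$, so $B$ is hit whenever a sampled point matches a fixed vector $z\in\mathbb{M}_m^{A_m}$ on the active set; this happens with probability $(2^m-1)^{-A_m}$, which depends only on $\varepsilon$ and carries no $\binom{d}{k}$ penalty. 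The number of patterns $(\mathcal{A},z)$ is $\binom{d}{A_m}(2^m-1)^{A_m}\le\bigl(d(2^m-1)\bigr)^{A_m}$, whose logarithm is $\mathcal{O}_\varepsilon(\log d)$. The union bound then gives a valid $N=c_\varepsilon\log d$ with $c_\varepsilon\approx(2^m-1)^{A_m}A_m$, resolving precisely the tension you identified. There is no need for VC-dimension or $\varepsilon$-net machinery.
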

\medskip

We will see that this result can be essentially derandomized using results from coding theory, 
while loosing only a negligible factor.
%Still, this result is impractical for small $\eps$.
The drawback of this approach is the extremely bad dependence of $c_\varepsilon$ on $\varepsilon$.
We sketch the main ideas of Sosnovec's proof.

\subsection{Sosnovec's proof - the setting}
For an integer $m\ge 2$ with $2^{-m}\le \varepsilon<2^{-m+1}$, we define
$$
{\mathbb M}_m=\Bigl\{\frac{1}{2^m},\dots,\frac{2^m-1}{2^m}\Bigr\}.
$$
The point set constructed will be a subset of ${\mathbb M}_m^d$. Furthermore, we define
\begin{equation}\label{eq:Omegam}
\Omega_m:=\Big\{\mathbb B=I_1\times\dots\times I_d\subset[0,1]^d\,:\,\vol(\mathbb B)>\frac{1}{2^m}\Big\}
\end{equation}
to be the set of all boxes with sides parallel to the coordinate axes and volume larger than $2^{-m}.$

Let ${\mathbb B}=I_1\times\dots\times I_d\in\Omega_m$. The key observation of \cite{Sosno} is that the number of indices $j\in\{1,\dots,d\}$ with ${\mathbb M}_m\not\subset I_j$
is bounded from above by $m2^m$, a quantity independent on $d$. To be more specific, if we denote
$$
{\mathcal A}({\mathbb B})=\{j\in\{1,\dots,d\}:{\mathbb M}_m\not\subset I_j\},
$$
then $\#{\mathcal A}({\mathbb B})\le A_m:=\min(m2^m,d)$ for every $\mathbb{B}\in\Omega_m$. 
We will refer to ${\mathcal A}({\mathbb B})$ as the set of ``active indices'' of ${\mathbb B}.$
If ${\mathcal A}({\mathbb B})$ is not of the full possible size, we enlarge it by adding any of the other indices to obtain a set with cardinality equal to $A_m$.
Therefore, we can associate to each ${\mathbb B}\in\Omega_m$ (possibly in a non-unique way) a set ${\mathcal A}$ with $\#{\mathcal A}=A_m$ and a vector
$z\in {\mathbb M}_m^{A_m}$ such that any $x\in{\mathbb M}_m^d$ with $x|_{\mathcal A}=z$ lies in ${\mathbb B}.$

Vice versa, if we have a point set $P=\{x^1,\dots,x^N\}\subset {\mathbb M}_m^d$, 
such that for every ${\mathcal A}\subset \{1,\dots,d\}$
with $\#{\mathcal A}=A_m$ and to every $z\in {\mathbb M}_m^{A_m}$, 
there is some $x^j\in P$ with $x^j|_{\mathcal A}=z$,
then, by what we just said, $P$ intersects every ${\mathbb B}\in\Omega_m$.
Therefore, the dispersion of $P$ can not be larger than $2^{-m}$, 
i.e., $\disp(P)\le 2^{-m}$ and hence $N(2^{-m},d)\le N$.

To simplify the combinatorial part later on,
we multiply all coordinates by $2^m$, which results to vectors with integer components.
This motivates the following definition.

\begin{definition}\label{defn:condS} Let $m\ge 2$. We say that $T=\{x^1,\dots,x^N\}\subset\{1,2,\dots,2^{m}-1\}^d$
satisfies \emph{the condition (S) of the order $m$} if for every 
${\mathcal A}\subset\{1,\dots,d\}$ with $\#{\mathcal A}=A_m:=\min(m2^m,d)$, 
the set of restrictions $x^1|_{\mathcal A},\dots,x^N|_{\mathcal A}$ contains all 
$(2^m-1)^{A_m}$ possible values.
\end{definition}

By what we said above, anytime $m\ge 2$ and $T=\{x^1,\dots,x^N\}$ satisfies the 
condition (S) of the order $m$, then $P\subset{\mathbb M}_m^d$ with $P=2^{-m}\cdot T$ 
satisfies $\disp(P)\le 2^{-m}$.
The proof of Theorem \ref{Theo:Sosnovec} is therefore finished, once we find a set $T$
with $\#T\le c_m\log(d)$ satisfying the condition (S) of order $m$.

\subsection{Sosnovec's proof - randomized construction}

The rest of the proof in \cite{Sosno} can now be understood as a randomized construction of a small set satisfying the condition (S).
Indeed, there are $\binom{d}{A_m}$ subsets $S$ of $\{1,\dots,d\}$ with $A_m$ elements.
We now fix one such set $S$ and one vector $z\in\{1,\dots,2^m-1\}^{A_m}$. The probability
that a point $x\in\{1,\dots,2^m-1\}^d$ chosen at random (from the uniform distribution) fulfills $x|_{S}=z$
is $(2^m-1)^{-A_m}$. Therefore, the probability that none of the $N$ randomly chosen points $x^1,\dots,x^N$ fulfills this restriction is $[1-(2^m-1)^{-A_m}]^N$.
Finally, the probability that there is a set $S\subset\{1,\dots,d\}$ with $\#S=A_m$ and a vector $z\in\{1,\dots,2^m-1\}^{A_m}$
such that no $\{x^1,\dots,x^N\}$ satisfies $x^j|_{S}=z$ is, by the union bound, at most
\[
\binom{d}{A_m}(2^m-1)^{A_m}[1-(2^m-1)^{-A_m}]^N.
\]
By simple calculus, if $N$ is of the order $c_m\log_2(d)$ for $c_m$ large enough, this expression is smaller than one.
Hence, with positive probability, the randomly chosen point set $\{x^1,\dots,x^N\}$ satisfies the condition (S).

\subsection{Derandomization using universal sets}

Definition \ref{defn:condS} resembles very much the concept of $(n,k)$-universal sets, 
see Section~\ref{subsec:universal}.
In particular, it is easy to see that every $(d,A_m,2^m-1)$-universal set 
-after adding 1 to each coordinate- 
satisfies condition (S) of the order $m$. 

Therefore, we can use Theorem \ref{thm:NSS} to replace the random arguments 
of the last section by a deterministic algorithm. 
We glue all the components together in the form of an algorithm.
\vskip.5cm
\fbox{\begin{minipage}{13cm}
{\bf Algorithm 1}
\begin{enumerate}
\item For $\varepsilon\in(0,\frac{1}{4}]$ and $d\ge2$, choose a positive integer $m$ \\
			with $2^{-m}\le\varepsilon<2^{-m+1}$ and set $A_m:=\min(m2^m,d)$;
\item Generate an $(md, m A_m)$-universal set as in \cite[Theorem~6]{NSS};
\item Interpret these vectors as digital decompositions to obtain \\ 
			an $(d,A_m,2^m)$-universal set;
\item Replace $2^{m}-1$ by $0$ in	all coordinates to obtain \\
			an $(d,A_m,2^m-1)$-universal set;
\item Increase all the coordinates by one; (This set satisfies (S).)
\item Finally, divide all the coordinates by $2^m$ and output the point set.
\end{enumerate}
\end{minipage}}
\vskip.5cm

It remains to consider the running-time of this algorithm.

\begin{theorem}\label{thm:main1}
Let $\varepsilon\in(0,\frac{1}{4}]$ and $d\ge 2$. Then there is a positive constant $c_\varepsilon>0$, such that
Algorithm 1 constructs 
%(in the linear time in the output) 
a set $P\subset[0,1]^d$ with $\disp(P)\le \varepsilon$ and 
\[
\#P\,\le\, %2^{m^2 2^{m}}(m^22^m)^{O(\log(m^22^m))}\log(md)
c_{\varepsilon}\log_2(d).
\]
%and
%\[
%\disp(P,d)\le \varepsilon.
%\]
The running time of Algorithm~1 is linear in the length of the output.
%, i.e. O($\#P\cdot d$).
\end{theorem}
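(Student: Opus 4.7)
The plan is to walk through the six steps of Algorithm 1 and verify that the output satisfies the desired properties, invoking the results of Section 2 at the appropriate places. For correctness, Step 2 produces an $(md, m A_m)$-universal set $T \subset \{0,1\}^{md}$ by Theorem \ref{thm:NSS:cite}. Step 3 reinterprets each block of $m$ consecutive coordinates as the base-2 digit expansion of an integer in $\{0,\dots,2^m-1\}$; this is exactly Lemma \ref{lem:uset}(ii) and yields an $(d, A_m, 2^m)$-universal set of the same size. Step 4 replaces every occurrence of the symbol $2^m-1$ by $0$, which by Lemma \ref{lem:uset}(i) produces an $(d, A_m, 2^m-1)$-universal set in $\{0,\dots,2^m-2\}^d$. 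Step 5 shifts every coordinate by $+1$, giving a set $T''' \subset \{1,\dots,2^m-1\}^d$ whose projection onto every $\mathcal{A} \subset \{1,\dots,d\}$ with $\#\mathcal{A} = A_m$ attains all $(2^m-1)^{A_m}$ possible values; this is precisely condition (S) of order $m$ from Definition \ref{defn:condS}. By the discussion preceding that definition, $P := 2^{-m} T'''$ satisfies $\disp(P) \le 2^{-m} \le \varepsilon$, which is what Step 6 outputs.

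For the cardinality bound, Theorem \ref{thm:NSS:cite} guarantees that the set produced in Step 2 has size at most $2^{m A_m + O(\log^2(m A_m))} \log(md)$, and none of the subsequent transformations increases the size. Since $A_m \le m 2^m$ and $m \le 1 + \log_2(\varepsilon^{-1})$, the exponent $m A_m$ is bounded by a quantity depending only on $\varepsilon$, so the prefactor in front of $\log(md)$ collapses into a constant $c_\varepsilon$ depending only on $\varepsilon$, yielding $\#P \le c_\varepsilon \log_2(d)$.

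For the running-time claim, Theorem \ref{thm:NSS:cite} states that the universal set in Step 2 is listable in time linear in the output length. Steps 3--6 are purely coordinate-wise transformations (reinterpretation of bit-blocks as integers, substitution of one symbol, addition of $1$, division by $2^m$) and can each be carried out in time linear in the size of the current set. Composing these gives a total running time that is linear in the length of the final output, as claimed. The only mild obstacle is keeping track of the several alphabet changes to ensure that the bookkeeping in Lemma \ref{lem:uset} is applied correctly; once that is in place, the proof is essentially a verification that the algorithm implements, in order, the transformations already justified in Section 2.
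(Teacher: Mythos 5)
Your proof is correct and follows essentially the same path as the paper: you unfold the paper's Theorem~\ref{thm:NSS} back into its two ingredients (Theorem~\ref{thm:NSS:cite} plus Lemma~\ref{lem:uset}) and then trace the six algorithmic steps to verify they implement the reduction to condition (S) and hence $\disp(P)\le 2^{-m}\le\varepsilon$, exactly as the paper does. The only minor imprecision is the phrase about the ``prefactor in front of $\log(md)$ collapsing into a constant'': since $\log(md)=\log m+\log d$, you should note that $\log m\le c'_\varepsilon$ is absorbed as an additive constant and $\log d\ge\log 2$, so the whole expression is still $\le c_\varepsilon\log_2 d$; this is immediate and does not affect the argument.
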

\medskip

\begin{proof}
The first part of the theorem is proven by what we said above. 
Concerning the running-time, we obtain from Theorem~\ref{thm:NSS}, 
that the $(d,A_m,2^m-1)$-universal set, which we generate in steps (2)--(4) of Algorithm~1, 
can be constructed in linear time and with size
\[
N= 
2^{mA_m+O(\log^2(m A_m))}\log(d)\le 2^{m^2 2^m+O(m^2)}\log_2(d).
\]
The remaining operations can be done in a linear time, without enlarging the point set. \\
\end{proof}
As expected, the dependence of the size of $P$ on $2^m\approx \varepsilon^{-1}$ is rather bad (as it was in \cite{Sosno}), but there is indeed only a logarithmic
dependence on $d$.

\bigskip

\section{Improving the dependence in $\varepsilon^{-1}$}\label{sec:UV}

The main aim of \cite{UV} was to refine the analysis of \cite{Sosno} and to achieve a better dependence of $N(\varepsilon,d)$
on $\varepsilon$, without sacrificing the logarithmic dependence on~$d$.
The main theorem of~\cite{UV} was the following.

\begin{theorem}\label{thm:UV18}
Let $d\ge2$ be a natural number and let $\eps\in(0,1/2)$. 
Then there exists a point set $P\subset[0,1]^d$ with $\disp(P)\le\eps$ and
\[
\#P \,\le\, 2^7\, \log_2(d)\, \frac{(1+\log_2(\eps^{-1}))^2}{\eps^2}.
\]
\end{theorem}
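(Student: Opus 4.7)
The plan is to refine Sosnovec's argument by stratifying the coordinates of each target box according to the dyadic length of its $j$-th side, rather than treating all ``narrow'' coordinates uniformly. Choose $m$ with $2^{-m}\le\eps<2^{-m+1}$, so $m=\Theta(1+\log_2(\eps^{-1}))$, and look for $P\subset\mathbb{M}_m^d$ that intersects every $B\in\Omega_m$. The set $P$ is produced by drawing $N$ independent uniform samples from $\mathbb{M}_m^d$, and the goal is to show that $N=O(\eps^{-2}(1+\log_2\eps^{-1})^2\log d)$ samples suffice with positive probability.

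To each $B=I_1\times\cdots\times I_d\in\Omega_m$ associate the profile
\[
k_\ell(B)\;:=\;\#\bigl\{\,j\,:\,|I_j\cap\mathbb{M}_m|/(2^m-1)\in(2^{-\ell-1},2^{-\ell}]\,\bigr\},\qquad \ell=0,1,\dots,m.
\]
The volume bound $\vol(B)>2^{-m}$ translates into the key scale constraint $\sum_{\ell\ge 1}\ell\,k_\ell\le m$; in particular, only $O(m)$ coordinates, weighted by their level, sit outside the ``full'' level $\ell=0$ where $\mathbb{M}_m\subset I_j$. By independence of the coordinates of a uniform sample $x\in\mathbb{M}_m^d$,
\[
\mathbb{P}[\,x\in B\,]\;\ge\;\prod_{\ell\ge 0}\bigl(2^{-\ell-1}\bigr)^{k_\ell}\;\gtrsim\;2^{-\sum_\ell\ell k_\ell}\;\ge\;2^{-m},
\]
and, crucially, the hit-probability depends on the box only through the single scalar $\sum_\ell\ell k_\ell$.

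The final step is a union bound organised by profile rather than over individual boxes. For each admissible profile $(k_\ell)$ with $\sum\ell k_\ell\le m$, the number of grid-aligned boxes realising that profile can be bounded by a combinatorial factor of the form $\prod_\ell\bigl(d\cdot 2^{O(m)}\bigr)^{k_\ell}$, coming from choosing which coordinates lie at each level and then specifying the interval within the level. Applying $(1-p)^N\le e^{-Np}$ with $p=2^{-\sum\ell k_\ell}$, summing the resulting geometric series over admissible profiles and choosing $N$ of the stated order drives the total failure probability below one, which yields existence of the required $P$.

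The main obstacle is making the weighted union bound tight. A Sosnovec-style estimate that treats all $O(m2^m)$ potentially ``active'' coordinates identically would reproduce the super-exponential $\eps^{-\Theta(m)}$ dependence. The improvement to polynomial $\eps^{-2}$ comes from exploiting that the combinatorial penalty $d^{k_\ell}$ is paid once per level-$\ell$ coordinate, while the hit-probability penalty weights the same coordinate by $\ell$; the constraint $\sum\ell k_\ell\le m$ then permits a global trade-off under which the $d$-dependence collapses to $\log d$ (appearing just once) and the $\eps$-dependence remains polynomial, at the cost only of the extra logarithmic factor $(1+\log_2\eps^{-1})^2$ absorbing the summation over profiles.
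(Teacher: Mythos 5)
Your high-level plan — stratify the active coordinates by dyadic scale, exploit a constraint of the form $\sum_\ell \ell k_\ell \le m$, and organize the union bound accordingly — has the right spirit, and indeed the paper also passes through a scale/position stratification. But the central probabilistic claim in your proposal does not hold, and this is exactly where the work is.

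You write
\[
\Pro[x\in B]\;\ge\;\prod_{\ell\ge 0}\bigl(2^{-\ell-1}\bigr)^{k_\ell}\;\gtrsim\;2^{-\sum_\ell \ell k_\ell}\;\ge\;2^{-m}.
\]
The middle inequality is false: the left-hand side equals $2^{-\sum_\ell(\ell+1)k_\ell}=2^{-\sum_\ell \ell k_\ell}\cdot 2^{-\sum_\ell k_\ell}$, and the discarded factor $2^{-\sum_\ell k_\ell}$ is not $\Theta(1)$. Even if one only counts \emph{active} coordinates (those with $\mathbb{M}_m\not\subset I_j$), there can be up to $A_m=m2^m$ of them, so $\sum_\ell k_\ell$ can be of order $m2^m$ and the true hit-probability for an individual box can be as small as $2^{-\Theta(m 2^m)}$ — precisely the Sosnovec-type loss you were trying to avoid. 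The constraint $\sum_\ell \ell k_\ell\le m$ controls the weighted sum but does \emph{not} control the plain count of active coordinates (most of them can sit at the smallest active scale). Hence your estimate, as stated, collapses back to $N\sim \eps^{-\Theta(\log\eps^{-1})}$, not $\eps^{-2}$.

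The mechanism that actually produces $\eps^{-2}$ is missing from your proposal. The paper's proof groups the boxes into families $\Omega_m(s,p)$ according to \emph{both} approximate side lengths $s$ and left endpoints $p$, and then considers the shrunken intersection $\mathbb{B}_m(s,p)=\bigcap_{\mathbb{B}\in\Omega_m(s,p)}\mathbb{B}$. The heart of the argument is Lemma~\ref{lem:discrete}: for a uniform $z\in\mathbb{M}_m^d$, $\Pro(z\in\mathbb{B}_m(s,p))\ge 2^{-m-4}$. This is a genuinely nontrivial estimate — it amounts to bounding $\prod_j(1+1/s_j)$ by an absolute constant using the volume constraint, and it cannot be obtained by multiplying a per-coordinate bound of the form $2^{-\ell-1}$. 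Moreover, hitting the intersection $\mathbb{B}_m(s,p)$ deals at once with the continuum of boxes in each group, which your proposal papers over with an unexplained reduction to ``grid-aligned'' boxes. Finally, your combinatorial count $\prod_\ell(d\cdot 2^{O(m)})^{k_\ell}$ would, taken literally over all levels including $\ell=0$, give a factor $d^{k_0}\approx d^{d}$; restricting to $\ell\ge 1$ omits precisely the active coordinates at level $0$, which do carry both a probability loss and positional freedom. In short, the idea of a scale-stratified union bound is on the right track, but the proof needs the intersection-box device and the quantitative Lemma~\ref{lem:discrete}; your chain of inequalities as written has a concrete gap at the second step.
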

\medskip

Also this result can be derandomized using results from coding theory. 
By doing this, we will lose some power of $1/\eps$ in the size of the 
point set. However, it will still be of order $\log(d)$.

\subsection{Enhanced analysis of the random construction} \label{subsec:UV-random}

The main novelty of \cite{UV} was a more careful splitting of $\Omega_m$ 
(see~\eqref{eq:Omegam}) into subgroups.
To be more specific (and using the notation of \cite{HPUV}), for $s=(s_1,\dots,s_d)\in\{1,\dots,2^{m}-1\}^d$ and 
$p=(p_1,\dots,p_d)\in\mathbb{M}_m^d$, we denoted 
$\Omega_m(s,p)$ to be those cubes from $\Omega_m$, which have $I_\ell$ approximately of the length $\frac{s_\ell}{2^m}$
and its left point around $p_\ell$ for all $\ell=1,\dots,d$, i.e.
\begin{align*}
\Omega_m(s,p)&:=\bigg\{\mathbb B=I_1\times\dots\times I_d\in\Omega_m \,\colon\, 
\forall \ell\in\{1,\dots,d\}: \frac{s_\ell}{2^{m}}<\vol(I_\ell) \le \frac{s_\ell+1}{2^{m}}\\
&\qquad\qquad \text{and}\quad\inf I_\ell\in \Big[p_\ell-\frac{1}{2^{m}},p_\ell\Big)\bigg\}.
\end{align*}
%and the index set
%\[
%\mathbb I_m \,:=\, \Bigl\{(s,p)\,\colon\, \Omega_m(s,p)\neq\emptyset\Bigr\}.
%\]
%\[
%A_m(s) := \# \Big\{\ell\in\{1,\dots,d\}\,\colon\, s_\ell <  2^m-1 \Big\}
%\]
%i.e., there are at most $A_m$ choices of $\ell$ with $s_\ell<2^m-1$. 
%Clearly, there are at most $\binom{d}{A_m} 2^{m A_m}$ 
%choices for $s\in\{0,1,\dots,2^{m}-1\}^d$ with $A_m(s)\le A_m$. 
%Moreover, for given $s$, 
%there are at most $2^{m A_m(s)}$ choices for $p$ with 
%$\Omega_m(s,p)\neq\emptyset$.

We denote by ${\mathbb I}_m$ the pairs $(s,p)$, for which $\Omega_m(s,p)$ is non-empty. 
It is easy to see that their number is bounded from above by
\begin{equation}\label{eq:card}
\#{\mathbb I}_m\le \exp\Bigl(m2^{m}\log(2^{m+3}d)\Bigr), 
\end{equation}
see~\cite[eq.~(2.1)]{HPUV}.
If $(s,p)\in{\mathbb I}_m$,  we further set
\begin{equation}\label{eq:def:B}
\mathbb B_m(s,p)  \,:=\, \bigcap_{\mathbb B\in \Omega_m(s,p)} \mathbb B
\,=\, \prod_{\ell=1}^d \Bigl[p_\ell,p_\ell+\frac{s_\ell-1}{2^{m}}\Bigr].
\end{equation}

The advantage of dividing of $\Omega_m$ into the groups $\Omega_m(s,p)$ is the surprisingly good control
of the probability that a randomly chosen point $z\in {\mathbb M}_m^d$ lies in the intersection
of all the cubes from $\Omega_m(s,p).$ It is actually, up to a constant, of the same order
as the volume of each of the cubes in $\Omega_m$, i.e. of $2^{-m}$.

\begin{lemma}(\cite[Lemma 3]{UV} and \cite[Lemma 2.1]{HPUV})\label{lem:discrete} 
Let $m\in\N$, $s\in\{0,1,\dots,2^{m}-1\}^d$ and 
$p\in\{1/2^{m},\dots,1-1/2^{m}\}^d$ be such that $(s,p)\in{\mathbb I}_m$. Let 
$z$ be uniformly distributed in $\mathbb M_m^d$. Then
\begin{equation*}
\Pro\big(z\in \mathbb B_m(s,p)\big)\ge \frac{1}{2 ^{m+4}}\,.
\end{equation*}
\end{lemma}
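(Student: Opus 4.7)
My plan is to factorize the probability over coordinates, extract two quantitative consequences from $(s,p)\in\mathbb{I}_m$, and reduce the lemma to a short convex-analytic estimate.

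Pick any $\mathbb{B}=I_1\times\dots\times I_d\in\Omega_m(s,p)$. Non-emptiness of $\Omega_m(s,p)$ forces $s_\ell\ge 1$ for every $\ell$: if instead $s_\ell=0$, then $\vol(I_\ell)\le 2^{-m}$ and hence $\vol(\mathbb{B})\le 2^{-m}$, contradicting $\vol(\mathbb{B})>2^{-m}$. By independence of the coordinates of $z$,
\[
\Pro(z \in \mathbb{B}_m(s,p)) \;=\; \prod_{\ell=1}^d \Pro\bigl(z_\ell \in [p_\ell, p_\ell + (s_\ell-1)2^{-m}]\bigr).
\]
The $\ell$-th interval contains the $s_\ell$ grid points $p_\ell, p_\ell + 2^{-m}, \ldots, p_\ell + (s_\ell-1) 2^{-m}$; combining $\sup I_\ell \le 1$ with $\inf I_\ell \ge p_\ell - 2^{-m}$ and $\vol(I_\ell) > s_\ell\, 2^{-m}$ gives $p_\ell + (s_\ell-1)2^{-m} < 1$, so all of these grid points lie in $\mathbb{M}_m$. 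Therefore
\[
\Pro(z \in \mathbb{B}_m(s,p)) \;=\; \prod_{\ell=1}^d \frac{s_\ell}{2^m - 1},
\]
while $\vol(\mathbb{B})\le\prod_\ell(s_\ell+1)/2^m$ produces the volume-type constraint $\prod_\ell(s_\ell+1) > 2^{m(d-1)}$.

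After these reductions the lemma becomes the combinatorial statement that, for all $s_\ell \in \{1,\ldots,2^m-1\}$ with $\prod_\ell(s_\ell+1) > 2^{m(d-1)}$,
\[
\sum_{\ell=1}^d \log_2\!\Bigl(\bigl(1 + \tfrac{1}{s_\ell}\bigr)\bigl(1 - \tfrac{1}{2^m}\bigr)\Bigr) \;\le\; 4.
\]
I would prove this via the substitution $b_\ell := m\log 2 - \log(s_\ell+1) \in [0,(m-1)\log 2]$, under which the constraint reads $\sum_\ell b_\ell < m\log 2$. A short computation shows that the summand, viewed as a function of $b_\ell$, is convex (second derivative $(s_\ell+1)/s_\ell^2 > 0$), increasing, and equal to zero at $b_\ell = 0$. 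Hence the supremum of the sum over the feasible polytope is attained at a vertex of its closure; the structure of the two linear constraints forces these vertices to place either two coordinates at $(m-1)\log 2$ (when $m=2$) or one coordinate at $(m-1)\log 2$ and one at $\log 2$ (when $m\ge 3$), with all remaining coordinates equal to $0$. Direct evaluation bounds the resulting sum by $\log_2(9/4)<2$, comfortably below $4$. The case $m=1$ is trivial, since then $s_\ell\equiv 1$ and $\Pro(z \in \mathbb{B}_m(s,p))=1$.

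The main obstacle is this convex-analytic step: verifying the convexity, describing the extreme points of the polytope cut out by both the coordinate-wise box and the sum constraint, and evaluating the summand at the two relevant extremal configurations. Everything else---the factorization, the lattice-point counting, and the derivation of the volume constraint---is routine bookkeeping once one commits to the correct setup, and the large gap between the attained bound $\log_2(9/4)$ and the required $4$ shows that the constant $2^{m+4}$ in the lemma is far from tight.
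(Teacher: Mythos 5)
The paper does not actually prove Lemma~\ref{lem:discrete}; it imports it verbatim from \cite[Lemma 3]{UV} and \cite[Lemma 2.1]{HPUV}, so your argument has to stand on its own --- and it does. Your setup is the natural one and every step checks out: non-emptiness of $\Omega_m(s,p)$ does force $s_\ell\ge 1$; the interval $[p_\ell,p_\ell+(s_\ell-1)2^{-m}]$ contains exactly $s_\ell$ points of $\mathbb M_m$ (your boundary check $p_\ell+(s_\ell-1)2^{-m}<1$, hence $\le 1-2^{-m}$ by divisibility, is the one place where care is needed, and you handled it); the identity $\prod_\ell \frac{s_\ell}{2^m-1}=\bigl(\prod_\ell\frac{s_\ell+1}{2^m}\bigr)\prod_\ell\bigl[(1+\tfrac1{s_\ell})(1-2^{-m})\bigr]^{-1}$ together with $\prod_\ell\frac{s_\ell+1}{2^m}>2^{-m}$ correctly reduces the lemma to your combinatorial inequality; and the convexity of $b\mapsto\log_2\bigl(\tfrac{u}{u-1}\bigr)$ with $u=2^me^{-b}$, plus the vertex enumeration of the polytope $\{b\in[0,(m-1)\log 2]^d:\sum b_\ell\le m\log 2\}$, is sound. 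I verified the extremal values: $\log_2(9/4)$ for $m=2$ and $\log_2\bigl((2^m-1)^2/(2^{2m-1}-2^m)\bigr)<\log_2(9/4)$ for $m\ge3$, both well below $4$ (in fact the argument shows the constant $2^{m+4}$ could be improved to $2^{m+2}$, consistent with your closing remark). One simplification worth noting: since your summand $g$ is convex on $[0,(m-1)\log 2]$ with $g(0)=0$, the chord bound gives $g(b)\le\frac{b}{(m-1)\log 2}\,g((m-1)\log 2)$, hence $\sum_\ell g(b_\ell)\le\frac{m}{m-1}\log_2(2-2^{1-m})<\frac{m}{m-1}\le 2$ directly, which dispenses with the vertex analysis and the case split on $m$ altogether (the original proof in \cite{UV} is closer in spirit to such an elementary coordinatewise estimate than to an extreme-point argument).
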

The aim of \cite{UV} was to combine \eqref{eq:card} with Lemma~\ref{lem:discrete} and the union bound.
Indeed, the probability that a randomly chosen point $z\in {\mathbb M}_m^d$ avoids ${\mathbb B}_m(s,p)$ is at most $1-2^{-m-4}.$
Therefore, the probability that a set $P=\{x^1,\dots,x^N\}\subset{\mathbb M}_m^d$ of $N$ randomly and independently generated points
does not intersect ${\mathbb B}_{m}(s,p)$
is at most $(1-2^{-m-4})^N$, i.e.
\[
{\mathbb P}\bigl(\forall \ell\in \{1,\dots,N\}\colon x^\ell\not\in {\mathbb B}_m(s,p)\bigr)
\,\le\, (1-2^{-m-4})^N.
\]
By the union bound over all $(s,p)\in{\mathbb I}_m$, we get further
%\begin{align*}
%{\mathbb P}(\exists (s,p)\in{\mathbb I}_m\ \forall \ell\in \{1,\dots,d\}:x^\ell\not\in {\mathbb B}_m(s,p))&\le \#{\mathbb I}_m\cdot (1-2^{-m-4})^N\\
%&\le\exp\Bigl(m2^{m}\log(2^{m+3}d)\Bigr)(1-2^{-m-4})^N.
%\end{align*}
\[
{\mathbb P}\bigl(\exists (s,p)\in{\mathbb I}_m\colon \forall \ell\in \{1,\dots,N\}\colon x^\ell\not\in {\mathbb B}_m(s,p)\bigr) 
\,\le\, \#{\mathbb I}_m\cdot (1-2^{-m-4})^N.
\]
As ${\mathbb B}_{m}(s,p)$ was defined in \eqref{eq:def:B} as the intersection of all cubes from $\Omega_m(s,p)$,
finding a point $x^\ell\in {\mathbb B}_m(s,p)$ means that the same point may be found in all cubes in $\Omega_m(s,p).$
We conclude that if $N$ is large enough to ensure that
\[
\exp\Bigl(m2^{m}\log(2^{m+3}d)\Bigr)(1-2^{-m-4})^N< 1,
\]
i.e., $N>m 2^{2m+4} \log(2^{m+3} d)$, 
then the randomly generated  $P=\{x^1,\dots,x^N\}\subset {\mathbb M}_m^d$  
intersects every ${\mathbb B}\in\Omega_m$ with positive probability.
Hence, there exists $P$ with $\# P\le N$ such that $\disp(P)\le2^{-m}$.
This is essentially the result of \cite{UV}.

\subsection{Connection to $k$-restriction problems}

By what we said above, if a point set $P\subset[0,1]^d$ intersects ${\mathbb B}_{m}(s,p)$
for all $(s,p)\in{\mathbb I}_m$, then $\disp(P)\le 2^{-m}.$ The randomized construction in \cite{UV},
which we now want to replace by a deterministic one, was restricted in its choice of points to ${\mathbb M}_m^d$.
Therefore we define
$$
{\mathbb C}_m(s,p)=2^m [{\mathbb B}_m(s,p)\cap {\mathbb M}_m^d],\quad (s,p)\in {\mathbb I}_m.
$$
\begin{definition} Let $m\ge 2$. We say that $T\subset\{1,2,\dots,2^{m}-1\}^d$
satisfies \emph{the condition (S') of the order $m$} if, for every $(s,p)\in{\mathbb I}_m$
it intersects ${\mathbb C}_m(s,p).$
\end{definition}
The rest of \cite{UV} then provides a randomized construction of a small set $T$, which satisfies the condition $(S')$
of order $m$. This task has two things in common with $k$-restriction problems. First, the system $\{{\mathbb C}_m(s,p):(s,p)\in{\mathbb I}_m\}$
is invariant under permutations of $\{1,\dots,d\}$ and second, the number of active coordinates is, for every $(s,p)\in{\mathbb I}_m$,
bounded from above by $A_m:=\min(m2^m,d).$

To build the connection between the condition $(S')$ and the $k$-restriction problems,
we choose the quadruplet of parameters $(b,k,n,m)$, see Definition~\ref{def:krestriction}, 
as $(2^m-1, A_m,d,M)$.
The system ${\mathcal C}$ collects the sets ${\mathbb C}_m(s,p)$ for those $(s,p)\in{\mathbb I}_m$
which have the corresponding active coordinates in $\{1,\dots,A_m\}$. Finally, $M$ is the cardinality of ${\mathcal C}.$

More formally, let $(s,p)\in{\mathbb I}_m$ with $s_j=2^m-1$ for $j>A_m$. 
Then, we set
\[
C_m{(s,p)}={\mathbb C}_{m}(s,p)-1
\]
and define
\begin{align}\label{eq:systemc}
{\mathcal C}&=\Bigl\{C_m{(s,p)}:(s,p)\in{\mathbb I}_{m} \text{ with }s_j=2^m-1\text{ for }j>A_m\Bigr\}.
\end{align}

We observe that a set $T$ satisfies the condition $(S')$ of order $m$ if, and only if, the set
$T-1$ satisfies the $k$-restriction problem with respect to ${\mathcal C}.$

The parameter $M$, which is just the cardinality of ${\mathcal C}$, can be estimated from 
above in a way similar to \eqref{eq:card}, but note that we do not have to choose the subset 
of active indices anymore. 
Each $C_m{(s,p)}\in{\mathcal C}$ is characterized by $s\in\{0,1,\dots,2^{m}-1\}^{A_m}$ and 
$p\in\{1/2^m,\dots,(2^m-1)/2^m\}^{A_m}$.
Therefore, 
\begin{equation}\label{eq:card2}
M\le 2^{2mA_m}.
\end{equation}

The second important parameter of a $k$-restriction problem is the minimal size 
$c:=c(\mathcal{C})$ of each of the restriction sets $C_m(s,p)$, see~\eqref{eq:c}.
A lower bound on $c$ follows directly from Lemma~\ref{lem:discrete} and we obtain 
\begin{equation}\label{eq:c2}
c \,\ge\, {\mathbb P}(z\in {\mathbb B}_m(s,p))\cdot \#{\mathbb M}_m^{A_m}
\,\ge\, 2^{-m-4}(2^m-1)^{A_m}.
\end{equation}
With the choice of parameters as given above, we have $c/b^k\ge 2^{-m-4}$.

\subsection{A first attempt for derandomization}

Using the arguments of the last section, one could use the construction from 
Theorem~\ref{thm:NSS2} directly to solve the corresponding $k$-restriction problem 
with parameters $(2^m-1, A_m,d,2^{2m A_m})$, whenever $d>2^m$. 
This leads to a point set $P\subset\mathbb{M}_m^d$ with $\disp(P)\le2^{-m}$ and
\[
\#P \,\le\, \left\lceil 2^{m+4}\log\bigl(d^{A_m} 2^{2m A_m}\bigr) \right\rceil 
\,=\, \mathcal{O}(m^2 2^{2m} \log(d)).
\]
Note that this bound matches the union bound from Section~\ref{subsec:UV-random}. 
However, the running-time of the algorithm, as given by Theorem~\ref{thm:NSS2}, 
is 
\[
\mathcal{O}\left(2^{m} d^{2A_m} 2^{2mA_m} \mathcal{T}\right)
%\,=\,\mathcal{O}\left(2^{m(1+2^{m+1}\log_2(2^m d))} \mathcal{T}\right), 
\,=\,\mathcal{O}\left(2^{m(1+m2^{m+1})} d^{2m^2 2^m} \mathcal{T}\right), 
\]
where $\mathcal{T}$ is the time complexity of the membership oracle, which can be assumed of the order
$\mathcal{O}(A_m)=\mathcal{O}(m2^m)$ in this case.

\medskip

\subsection{Derandomization using splitters}

We now describe how we can improve the construction of an explicit solution to the 
desired $k$-restriction problem with parameters $(b,k,n,m)$ equal to 
$(2^m-1,A_m,d,2^{2m A_m})$
and the set system ${\mathcal C}$ defined by \eqref{eq:systemc}.
%In the end, the proposed algorithm to generate a point set with dispersion at most $\eps$ 
%will have running-time $\mathcal{O}_\eps(\log(d))$ 
%{\bf (or $\mathcal{O}_\eps(d\log(d))$ if all components have to be outputted separately ???)}
%For this, we will use a 'trick' that was already used by the authors
We use the approach of~\cite{NSS} to 
obtain solutions of the $k$-restriction problem which are 'small' in size and 
running-time of the corresponding algorithm. 'Small' means here, that the dependence 
on the original problem dimension $d$ is as small as possible. 
%In fact, we will obtain an algorithm whose dependence (of size and running-time) on $d$ 
%is optimal.

In the heart of the constructions are splitters, see Section~\ref{subsec:splitter}.
As already indicated in Section~\ref{subsec:splitter}, we use a $(d,A_m,A_m^2)$-splitter, 
say $A(m,d)$, to map the original $d$-dimensional problem to a $k$-restriction problem 
in dimension $A_m^2$, which can then be solved with cost independent of $d$.

Recall that, by Definition~\ref{def:splitters}, $A(m,d)$ is an $(d,A_m,A_m^2)$-splitter, 
if it is a collection of mappings $a:\{1,\dots,d\}\to\{1,\dots,A_m^2\}$ such that 
for every $S\subset \{1,\dots,d\}$ with $\#S=A_m$, there is an $a\in A(m,d)$,
which is injective on $S$, i.e., $a(S)\subset\{1,\dots,A_m^2\}$ has $A_m$ elements. 
%By Lemma~\ref{lem:splitter}, the size of $A(m,d)$ can be assumed to be 
%$O(A_m^6\log(A_m)\log(d))$. {\bf (Delete last statement? It's in the proof below.)}

Further, let $T(m)\subset \{1,2,\dots,2^m-1\}^{A_m^2}$ be the solution of the 
$k$-restriction problem with parameters $(2^m-1,A_m,A_m^2,2^{2m A_m})$ 
with respect to the original system of restrictions ${\mathcal C}$, 
see~\eqref{eq:systemc}. 
This means, that
$T(m)=\{\tau_1,\dots,\tau_K\}\subset\{0,1,\dots,2^m-2\}^{A_m^2}$ 
such that for every $S'\subset\{1,\dots,A_m^2\}$ with $\#S'=A_m$ and any 
$C\in\mathcal{C}$ there is $\tau\in T(m)$ with $\tau|_{S'}\in C$.

Now we are in the position to define the solution to the restriction problem with 
parameters $(2^m-1,A_m,d,2^{2m A_m})$ and the system ${\mathcal C}$. Indeed, we define
\[\begin{split}
T^*&=T(m)\circ A(m,d)\\
&:=\{\tau\circ a:\{1,\dots, d\}\to \{0,1,\dots,2^m-2\}:\tau\in T(m), a\in A(m,d)\}
\end{split}\]
to be the set of concatenations of any splitter with any element of the 
solution to the restriction problem. 
Here, we switch between the notion of vectors of length $d$ (resp.~$A_m^2$) 
and mappings from $\{1,\dots,d\}$ (resp.~$\{1,\dots,A_m^2\}$) to $\{0,1,\dots,2^m-2\}$. 
This should not lead to any confusion.

To show that $T^*$ is indeed a solution to our $k$-restriction problem, 
let $S\subset \{1,\dots,d\}$ with $\# S=A_m$. 
Then, there exists $a\in A(m,d)$, such that $S'=a(S)\subset\{1,2,\dots,A_m^2\}$ 
has $A_m$ mutually different elements, i.e., $\#S'=A_m$. 
Now, for every $C\in\mathcal{C}$, there is some $\tau\in T(m)$, 
such that $C\ni\tau|_{S'}=(\tau\circ a)|_S$. 
Hence, $T^*$, which satisfies $\#T^*=\#T(m)\cdot \#A(m,d)$, 
is a solution to the restriction problem with parameters $(2^m-1,A_m,d,2^{2m A_m})$.

We merge all the components together in a form of an algorithm.

\vskip.5cm
\fbox{\begin{minipage}{13cm}
{\bf Algorithm 2}
\begin{enumerate}
\item For $\varepsilon\in(0,\frac{1}{4}]$ and $d\ge2$, choose a positive integer $m$ \\
			with $2^{-m}\le\varepsilon<2^{-m+1}$ and set $A_m:=\min(m2^n,d)$;
\item Generate a $(d,A_m,A_m^2)$-splitter $A(m,d)$ as in Lemma~\ref{lem:splitter};
\item Generate a solution $T(m)$ to the $k$-restriction problem with parameters 
			$(2^m-1,A_m,A_m^2,2^{2m A_m})$ and restrictions $\mathcal{C}$ from~\eqref{eq:systemc} 
			as in Theorem~\ref{thm:NSS2};
\item Set $T^* = T(m)\circ A(m,d) \subset \{0,\dots,2^m-2\}^d$;
\item Increase all the coordinates by one, then divide them by $2^m$;
\item Output the resulting point set $P$.
\end{enumerate}
\end{minipage}}
\vskip.5cm

\begin{theorem}\label{thm:main2}
Let $\varepsilon\in(0,\frac{1}{4}]$ and $d\ge 2$. 
Then, there is an absolute constant $C<\infty$, such that
Algorithm 2 constructs a set $P\subset[0,1]^d$ with $\disp(P)\le \varepsilon$ and 
\[
\#P\le C \left(\frac{1+\log_2(\varepsilon^{-1})}{\varepsilon}\right)^6 \log(d^*),
\]
where $d^*:=\max(d, 2/\eps)$. 

The running-time of Algorithm~2 is $\mathcal{O}_\eps(d^c)$ for some $c<\infty$.
%\[
%\mathcal{O}\left(\left(\frac{1}{\eps}\right)^{5(1/\eps)\log_2(1/\eps)}\, \log(d)\right).
%\]
\end{theorem}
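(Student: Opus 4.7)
The plan is to verify three things in turn: correctness (dispersion bound), cardinality, and running-time, using as black boxes the construction of $T^*=T(m)\circ A(m,d)$ established just before the algorithm statement and the quantitative tools of Lemma~\ref{lem:splitter} and Theorem~\ref{thm:NSS2}.

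For correctness, I would recall the chain of reductions already built up: a solution to the $k$-restriction problem with parameters $(2^m-1, A_m, d, 2^{2mA_m})$ and system $\mathcal{C}$ from~\eqref{eq:systemc}, after the shift by one and scaling by $2^{-m}$ in steps (5)--(6), produces a point set $P\subset\mathbb{M}_m^d$ satisfying the condition (S$'$) of order $m$, which by the definition of $\mathbb{C}_m(s,p)$ and Lemma~\ref{lem:discrete} forces $P$ to intersect every $\mathbb B\in\Omega_m$, so that $\disp(P)\le 2^{-m}\le\varepsilon$. The key point, already verified in the text preceding the algorithm, is that $T^*$ is indeed a valid solution: given any $S\subset\{1,\dots,d\}$ with $\#S=A_m$, the splitter supplies $a\in A(m,d)$ injective on $S$, and then any restriction $C\in\mathcal{C}$ is hit on $a(S)$ by some $\tau\in T(m)$.

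For the cardinality I use $\#P\le\#T^*\le\#T(m)\cdot\#A(m,d)$. Lemma~\ref{lem:splitter} gives $\#A(m,d)=\mathcal O(A_m^4\log d)$. Theorem~\ref{thm:NSS2}, together with the estimates $c(\mathcal{C})\ge 2^{-m-4}(2^m-1)^{A_m}$ from~\eqref{eq:c2} and $M\le 2^{2mA_m}$ from~\eqref{eq:card2}, yields
\[
\#T(m)\;\le\;\left\lceil \frac{(2^m-1)^{A_m}}{c}\,\log\bigl((A_m^2)^{A_m}\cdot 2^{2mA_m}\bigr)\right\rceil \;=\;\mathcal O\bigl(2^{m+4}\cdot m A_m\bigr)\;=\;\mathcal O(m^2\,2^{2m}).
\]
Multiplying and substituting $A_m\le m\,2^m$ gives $\#P=\mathcal O(m^6\, 2^{6m}\log d)$. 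Since $2^{-m}\le\varepsilon<2^{-m+1}$, we have $2^m\asymp\varepsilon^{-1}$ and $m\asymp 1+\log_2(\varepsilon^{-1})$, producing the claimed bound with $\log d$ in place of $\log d^*$. To handle the case $d<2/\varepsilon$ (so that $d<2^m$ and $A_m=d$), I note that the roles of $d$ and $A_m^2$ in the size estimate can be absorbed by replacing $\log d$ by $\log d^*=\log\max(d,2/\varepsilon)$: when $d$ is small, the splitter step is essentially trivial and $\#T(m)$ dominates, while $d$ appears only through $\log d^*$ due to the uniform $\varepsilon^{-6}\cdot(1+\log_2\varepsilon^{-1})^6$ prefactor.

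For the running time I would compose the three costs: the splitter $A(m,d)$ is constructed in time polynomial in $d$ and $k=A_m$ by the explicit code of~\cite{Alonall} behind Lemma~\ref{lem:splitter}; the $k$-restriction solver of Theorem~\ref{thm:NSS2} applied on the \emph{reduced} dimension $n=A_m^2$ runs in time $\mathcal O\bigl(\frac{b^k}{c}(eA_m^3)^{A_m}\,2^{2mA_m}\cdot\mathcal T\bigr)$ with $\mathcal T=\mathcal O(A_m)$, which depends on $m$ only and is thus an $\varepsilon$-dependent constant; finally, forming $T^*=T(m)\circ A(m,d)$ and applying steps (5)--(6) costs $\mathcal O(d\cdot\#T^*)$. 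Combining, the total is $\mathcal O_\varepsilon(d^c)$ for an absolute $c$. The step I expect to be delicate is the bookkeeping that yields the exponent $6$ exactly: one must carefully track that the splitter contributes $A_m^4$, the restriction solver contributes $2^{m+4}\cdot mA_m$, and that the resulting $m^6\,2^{6m}$ factor matches $((1+\log_2\varepsilon^{-1})/\varepsilon)^6$ without any hidden $\log d$ inflation in $A_m$ itself, which is why the definition $A_m=\min(m2^m,d)$ (rather than $m2^m$) is used, and why $d^*$ rather than $d$ appears in the final bound.
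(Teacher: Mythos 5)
Your argument tracks the paper's own proof very closely: correctness of the construction via condition (S$'$), the cardinality bound $\#T^*\le\#T(m)\cdot\#A(m,d)=\mathcal O(m^6\,2^{6m}\log d)$ combining Lemma~\ref{lem:splitter}, Theorem~\ref{thm:NSS2}, \eqref{eq:card2} and~\eqref{eq:c2}, and the three-part running-time analysis. (Indeed, your evaluation $\bigl(e A_m^3\bigr)^{A_m}$ of $(e n^2/k)^k$ at $n=A_m^2$, $k=A_m$ is the correct one; the paper's displayed $(e A_m)^{A_m}$ is a typo, though neither affects $\mathcal O_\eps(d^c)$.)

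The one genuine gap is the treatment of the case $d<2/\eps$. Theorem~\ref{thm:NSS2} requires $b\le n$, and after the splitter the reduced problem has $b=2^m-1$ and $n=A_m^2$. When $d<2^m$ one has $A_m=d$, and it can well happen that $2^m-1>A_m^2=d^2$ (e.g.\ $d=2$ and $\eps\le 1/8$, so $m\ge 3$), in which case Theorem~\ref{thm:NSS2} is simply not applicable as stated; asserting that ``the splitter step is essentially trivial and $\#T(m)$ dominates'' and that one may ``replace $\log d$ by $\log d^*$'' does not explain how $T(m)$ is produced in that regime. The paper closes this hole concretely: if $d<\lfloor 2/\eps\rfloor$, it runs the entire construction in dimension $\lfloor 2/\eps\rfloor$ (so that $A_m\ge 2^m$ and hence $A_m^2\ge 2^{2m}>2^m-1$, making the theorem applicable) and then deletes the last $\lfloor 2/\eps\rfloor-d$ coordinates of the resulting point set, which cannot increase the dispersion. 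This embedding-and-projection step is precisely what makes $d^*=\max(d,2/\eps)$ appear in the bound and should replace your heuristic remark; it is also not caused by the choice $A_m=\min(m2^m,d)$, which is needed simply because one cannot have more active coordinates than $d$.
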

\medskip

\begin{proof}
In case that $d<\lfloor 2/\eps\rfloor$, replace $d$ by $\lfloor 2/\eps\rfloor$ 
in all what follows and delete at the end the last $\lfloor 2/\eps\rfloor-d$ 
coordinates of the constructed point set.

By Lemma~\ref{lem:splitter}, there is an explicit, deterministic construction 
of a $(d,A_m,A_m^2)$-splitter $A(m,d)$ of size 
$\mathcal{O}(A_m^4\log(d))=\mathcal{O}(m^4 2^{4m}\log(d))$. 
Furthermore, by Theorem~\ref{thm:NSS2} and since 
$d\ge\lfloor 2/\eps\rfloor> 2^m-1$, there is a deterministic solution $T(m)$
to the $k$-restriction problem with parameters $(2^m-1,A_m,A_m^2,2^{2m A_m})$ 
and restrictions $\mathcal{C}$ from~\eqref{eq:systemc}
of size
\[
\#T(m) \,=\, \mathcal{O}\left(2^m \log(A_m^{2A_m} 2^{2m A_m})\right) 
\,=\, \mathcal{O}\left(m^2 2^{2m}\right).
\]
Here, we also used~\eqref{eq:c2}.
Altogether, we get
\[
\#T^* \,=\, \#T(m)\cdot \#A(m,d) 
\,=\, \mathcal{O}\left(m^6 2^{6m} \log(d)\right),
\]
which implies the result since $m\le 1+\log_2(1/\eps)$.

For the running-time, we need 
$\mathcal{O}_m(d^c)$ 
for the construction of $A(m,d)$, see Lemma~\ref{lem:splitter}, 
and 
\[
\mathcal{O}(2^m(e A_m)^{A_m} 2^{2m A_m}\mathcal{T})
\,=\, \mathcal{O}(2^{4m^2 2^m}\mathcal{T})
\] 
for the construction of $T(m)$, see Theorem~\ref{thm:NSS2}, 
where, again, the cost $\mathcal{T}$ of the membership oracle can be 
assumed to be $\mathcal{O}(A_m)$.
Finally, we need $\mathcal{O}(\#T^*)$ time to build up $T^*$.
The remaining steps are less expensive, which implies the result. \\
\end{proof}

\bigskip

\end{document}